\newcommand{\R}{{\Bbb R}}
\newcommand{\C}{{\Bbb C}}
\newcommand{\re}{\text{\upshape Re\,}}
\newcommand{\im}{\text{\upshape Im\,}}
\newcommand{\ntlim}{\lim^\angle}
\def\XXint#1#2#3{{\setbox0=\hbox{$#1{#2#3}{\int}$}
\vcenter{\hbox{$#2#3$}}\kern-.5\wd0}}
\newtheorem{theorem}{Theorem}[section]
\newtheorem{lemma}[theorem]{Lemma}
\newtheorem{assumption}[theorem]{Assumption}
\newtheorem{remark}[theorem]{Remark}
\newtheorem{figuretext}{Figure}
\numberwithin{equation}{section}
\date{\today}
\title[Long-time asymptotics for the derivative nonlinear Schr\"odinger]
{Long-time asymptotics for the derivative nonlinear Schr\"odinger equation on the half-line}
\author{Lynnyngs Kelly Arruda}
\address{Department of Mathematics, Federal University of S\~ao Carlos, \\ PO B 676, 13565-905 S\~ao Carlos, Brazil.}
\email{lynnyngs@dm.ufscar.br, lynnyngs@kth.se}
\author{Jonatan Lenells}
\address{Department of Mathematics, KTH Royal Institute of Technology, \\ 100 44 Stockholm, Sweden }
\email{jlenells@kth.se}
\begin{document}
\begin{abstract} 
\noindent
We derive asymptotic formulas for the solution of the derivative nonlinear Schr\"odinger equation on the half-line under the assumption that the initial and boundary values lie in the Schwartz class. The formulas clearly show the effect of the boundary on the solution. The approach is based on a nonlinear steepest descent analysis of an associated Riemann-Hilbert problem. 
\end{abstract}

\maketitle

\noindent
{\small{\sc AMS Subject Classification (2010)}: 35Q55, 35Q15, 41A60.}

\noindent
{\small{\sc Keywords}: Nonlinear steepest descent, initial-boundary value problem, Riemann-Hilbert problem, asymptotics.}

\setcounter{tocdepth}{1}
\tableofcontents

\section{Introduction}
The derivative nonlinear Schr\"odinger (DNLS) equation
\begin{align}\label{dnls}
iu_t +  u_{xx} = i (|u|^2u)_x, 
\end{align}
arises as a model both in nonlinear fiber optics and in plasma physics. 
In the context of fiber optics, it models the propagation of nonlinear pulses in optical fibers when certain higher-order nonlinear effects, such as self-steepening, are taken into account \cite{K1985, A2007}. In the context of plasma physics, it describes the motion of Alfv\'en waves propagating parallel to the ambient magnetic field \cite{M1976}. 

In addition to its physical relevance, equation (\ref{dnls}) has several mathematically appealing properties stemming from its integrability. In \cite{KN1978}, Kaup and Newell presented a Lax pair for equation (\ref{dnls}) and implemented the inverse scattering transform for the solution of the initial-value problem (IVP) in the case of decaying initial data. An analogous implementation of the inverse scattering transform for solutions $u(x,t)$ which approach a constant as $x\to \pm \infty$ was presented in \cite{KI1978}. Equation (\ref{dnls}) admits solitons with exponential decay at infinity as well as algebraic solitons \cite{KN1978}. 
Wellposedness of the IVP for (\ref{dnls}) was studied in \cite{HO1992}, both in Sobolev spaces and in the Schwartz class. Global existence results can also be obtained via inverse scattering techniques \cite{LPS2016}. The long-time asymptotics of the solution of the IVP can be determined by performing a nonlinear steepest descent analysis of the associated Riemann-Hilbert (RH) problem \cite{KV1997, XF2012, XFC2013}.

In this paper, we are concerned with the initial-boundary value problem (IBVP) for equation (\ref{dnls}) posed in the quarter-plane domain 
$$\{(x,t) \in \R^2 \, | \, x \geq 0, t \geq 0\}.$$
This IBVP can be analyzed by means of the unified transform for nonlinear integrable PDEs introduced in \cite{F1997}, which is a generalization of the inverse scattering transform to the setting of IBVPs. In fact, assuming that the solution is sufficiently smooth and has rapid decay as $x \to \infty$, it was shown in \cite{L2008} that $u(x,t)$ can be represented in terms of the solution of a matrix RH problem with jump matrices given in terms of four spectral functions $a(k)$, $b(k)$, $A(k)$, $B(k)$. The functions $a(k)$, $b(k)$ are defined in terms of the initial data $u_0(x) = u(x,0)$, whereas $A(k)$, $B(k)$ are defined in terms of the boundary values $g_0(t) = u(0,t)$ and $g_1(t) = u_x(0,t)$. 

Our purpose here is to derive the long-time asymptotics of the solution $u(x,t)$ on the half-line by performing a nonlinear steepest descent analysis of the RH problem of \cite{L2008}. 
We assume that the initial and boundary values lie in the Schwartz class. 
Our main result (Theorem \ref{mainth}) shows that the long-time asymptotics of $u(x,t)$ in the sector defined by (see Figure \ref{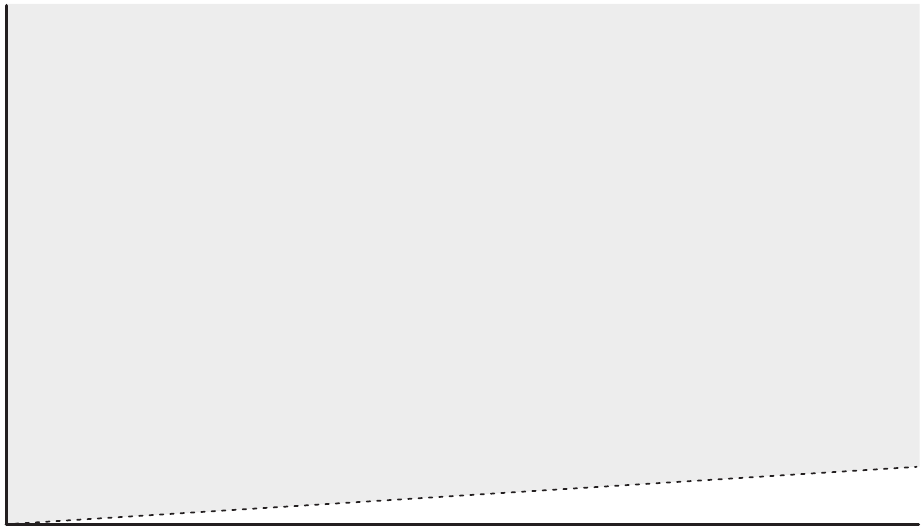})
\begin{align}\label{similaritysector}
0 \leq x \leq N t, \qquad \text{$N$ constant},
\end{align}
is given by
\begin{align}\label{uasymptoticsintro}
u(x,t) = \frac{u_a(x,t)}{\sqrt{t}} + O\bigg(\frac{\ln t}{t}\bigg), \qquad t \to \infty, \ 0 \leq x \leq Nt,
\end{align}
where $u_a(x,t)$ is given explicitly in terms of the `reflection coefficient' $r(\lambda)$ defined by
$$r(\lambda) = \frac{\overline{b({\bar k})}}{ka(k)} - \frac{ \overline{B(\bar{k})}}{k a(k) d(k)}, \qquad \lambda = k^2 \in (-\infty,0),$$
with $d(k) := a(k)\overline{A(\bar{k})} -  b(k) \overline{B(\bar{k})}$. Of particular interest is the influence of the boundary on the long-time asymptotics. 
It turns out that the global relation (a relation between the initial and boundary values expressed at the level of the spectral functions) forces the spectral function $r(\lambda)$ to vanish to all orders at $\lambda = 0$. This, in turn, implies that the leading-order coefficient $u_a(x,t)$ in (\ref{uasymptoticsintro}) satisfies $u_a(x,t) = O(|x/t|^n)$ for every $n \geq 1$ as $x/t \to 0$ (see Remark \ref{effectremark}). The fact that $u_a(x,t)$ vanishes at $x = 0$ is consistent with our assumption of rapidly decaying boundary values. The vanishing of $u_a(x,t)$ as $x/t \to 0$ also illustrates the effect of the boundary on the solution in the region near $x = 0$. Indeed, for the pure IVP problem (i.e., in the absence of a boundary), the asymptotics of the solution $u(x,t)$ is given by a formula similar to (\ref{uasymptoticsintro}) except that the spectral function $r(\lambda)$ is defined in terms of the initial data alone. In this case, there is no global relation which forces $r(\lambda)$ to vanish at the origin. Hence, in the absence of a boundary, the solution, in general, only decays like $t^{-1/2}$ near $x = 0$.

\begin{figure}
\bigskip
\begin{center}
\begin{overpic}[width=.45\textwidth]{sector.pdf}
      \put(28,30){\small $u = \tfrac{u_a}{\sqrt{t}} + O\big(\tfrac{\ln t}{t}\big)$} 
      \put(102,-1){\small $x$}
      \put(102,6){\small $t = x/N$}
      \put(0,60){\small $t$}
      \end{overpic}
     \begin{figuretext}\label{sector.pdf}
       The result of Theorem \ref{mainth} provides the long-time behavior of $u(x,t)$ in the asymptotic sector $0 \leq x \leq Nt$ (shaded).
     \end{figuretext}
     \end{center}
\end{figure}

The nonlinear steepest descent method was first introduced by Deift and Zhou in \cite{DZ1993}, where they derived the long-time asymptotics for the IVP for the modified KdV equation. The method has since then proved successful in determining asymptotic formulas for a large range of other IVPs for various integrable equations, see e.g. \cite{BV2007, DP2011, DVZ1994, DKKZ1996, HXF2015, KV1997, XF2012, XFC2013, K2008, KMM2003, BM2016, KT2009, GT2009}. By combining the ideas of \cite{DZ1993} with the unified transform formalism of \cite{F1997}, it is also possible to study asymptotics of solutions of IBVPs for nonlinear integrable PDEs \cite{BFS2004, BIK2009, BKSZ2010, BS2009, FIS2005, L2016}. Our presentation here essentially follows the approach of \cite{L2016}, where the asymptotics of the solution of the mKdV equation on the half-line was determined in the similarity and self-similar regions. 

Compared with the analysis of the IVP (see \cite{KV1997, XF2012}), the nonlinear steepest descent analysis of the half-line problem for (\ref{dnls}) presents some additional challenges. Two of these challenges are: 

\begin{enumerate}[$(a)$]
\item Whereas the RH problem relevant for the IVP for (\ref{dnls}) only has a jump across $\R \cup i\R$, the RH problem relevant for the IBVP also has jumps across the diagonal lines $e^{\frac{\pi i}{4}}\R$ and $e^{\frac{3 \pi i}{4}}\R$. The jumps across these diagonal lines can be expressed in terms of a spectral function $h(\lambda)$ whose definition involves the initial and boundary values. By introducing an analytic approximation of $h(\lambda)$ and by performing an additional deformation of the contour, we are able to compute the contribution from $h(\lambda)$ to the asymptotics.

\item The reflection coefficient which enters the RH problem for the IVP decays rapidly as the spectral parameter $k$ tends to infinity. In contrast, for the half-line problem, the analogous coefficients, in general, only decay like $1/k$ as $k \to \infty$. This absence of rapid decay has certain consequences. For example, the form of the Lax pair for (\ref{dnls}) suggests that it is beneficial to introduce a new spectral parameter $\lambda$ according to $\lambda = k^2$, see \cite{KN1978}. For the IBVP, the slow decay of the spectral functions implies that a naive introduction of $\lambda$ in the original RH problem fails. However, by first transforming the RH problem so that the off-diagonal entries of the jump matrix become $O(k^{-2})$ as $k \to \infty$, we are able to circumvent this difficulty. Another consequence of the slow decay is that we have to be more careful when introducing analytic approximations of $h(\lambda)$ and $r(\lambda)$. Indeed, these approximations have to take the asymptotic behavior of $h(\lambda)$ and $r(\lambda)$ at infinity into account. 
\end{enumerate}

\subsection{Organization of the paper}
In Section \ref{prelsec}, we briefly review how the solution of equation (\ref{dnls}) on the half-line can be expressed in terms of the solution of a RH problem. 
Our main result, Theorem \ref{mainth}, is presented in Section \ref{mainsec}. Sections \ref{specsec}-\ref{finalsec} are devoted to the proof of this result. In Section \ref{specsec}, we simplify the RH problem by introducing a new spectral parameter. In Section \ref{transsec}, we use the ideas of the nonlinear steepest descent method to transform the RH problem to a form suitable for determining the long-time asymptotics. 
In Section \ref{localsec}, we consider a local model for the RH problem near the critical point. 
Finally, by combining the above steps, we find the asymptotics of $u(x,t)$ in Section \ref{finalsec}.

\section{Preliminaries}\label{prelsec}
Equation (\ref{dnls}) is the compatibility condition of the Lax pair equations 
\begin{align}\label{mulax}
\begin{cases}
 \mu_x +ik^2[\sigma_3, \mu]  = \mathsf{U} (x,t,k)\mu, \\
\mu_t + 2ik^4[\sigma_3, \mu] = \mathsf{V} (x,t,k)\mu,
\end{cases}
\end{align}
where $k \in \C$ is the spectral parameter, $\mu(x,t,k)$ is a $2\times 2$-matrix valued eigenfunction, 
\begin{align*}
& \mathsf{U}(x,t,k) = \begin{pmatrix} -\frac{i}{2}|u|^2 & kue^{-2i\int_{(0,0)}^{(x,t)}\Delta} \\ k\bar{u}e^{2i\int_{(0,0)}^{(x,t)}\Delta} & \frac{i}{2}|u|^2\end{pmatrix}, \qquad \sigma_3 = \begin{pmatrix} 1 & 0 \\ 0 & -1 \end{pmatrix},
	\\
& \mathsf{V}(x,t,k) = \begin{pmatrix} -ik^2|u|^2-\frac{3i}{4}|u|^4-\frac{1}{2}({\bar u}_xu-{\bar u}u_x) & 
(2k^3u+iku_x+k |u|^2 u) e^{-2i\int_{(0,0)}^{(x,t)}\Delta}\\
 (2k^3{\bar u}-ik{\bar u}_x+k|u|^2 \bar{u}) e^{2i\int_{(0,0)}^{(x,t)}\Delta} &  ik^2|u|^2+\frac{3i}{4}|u|^4+\frac{1}{2}({\bar u}_xu-{\bar u}u_x) \end{pmatrix},
\end{align*}
and $\Delta$ is the closed real-valued one-form 
\begin{align}\label{Deltadef}
\Delta(x,t)=\frac{1}{2}|u|^2dx+\left( \frac{3}{4}|u|^4-\frac{i}{2}({\bar u}_xu-{\bar u}u_x)\right)dt.
\end{align}
The version (\ref{mulax}) of the Lax pair for (\ref{dnls}) is suitable for defining eigenfunctions with the appropriate asymptotics as $k \to \infty$; it was derived in \cite{L2008} by transforming the Lax pair of Kaup and Newell \cite{KN1978}.

\subsection{Riemann-Hilbert problem}
The solution of (\ref{dnls}) on the half-line can be expressed in terms of the solution of a $2\times 2$-matrix valued RH problem as follows, see \cite{L2008}. 
Let $u_0(x) = u(x,0)$ denote the initial data and let $g_0(t) = u(0,t)$ and $g_1(t) = u_x(0,t)$ denote the Dirichlet and Neumann boundary values, respectively. Suppose $u_0(x)$, $g_0(t)$, $g_1(t)$ belong to the Schwartz class $\mathcal{S}([0,\infty))$. Define four spectral functions $\{a(k), b(k), A(k), B(k)\}$ by
\begin{align}\label{abABdef}
\mathsf{X}(0,k) = \begin{pmatrix} 
\overline{a(\bar{k})} 	&	b(k)	\\
\overline{b(\bar{k})}	&	a(k)
\end{pmatrix},	\qquad 
\mathsf{T}(0,k) = \begin{pmatrix} 
\overline{A(\bar{k})} 	&	B(k)	\\
\overline{B(\bar{k})}	&	A(k)
\end{pmatrix},
\end{align}
where $\mathsf{X}(x,k)$ and $\mathsf{T}(t,k)$ are the solutions of the Volterra integral equations
\begin{align*}
 & \mathsf{X}(x,k) = I + \int_{\infty}^x e^{i k^2(x'-x)\hat{\sigma}_3} \mathsf{U}(x',0,k) \mathsf{X}(x',k) dx',
	\\
& \mathsf{T}(t,k) = I + \int_{\infty}^t e^{2i k^4(t'-t)\hat{\sigma}_3} \mathsf{V}(0,t', k) \mathsf{T}(t',k) dt',
 \end{align*}
and $e^{\hat{\sigma}_3}$ acts on a $2 \times 2$ matrix $A$ by $e^{\hat{\sigma}_3}A = e^{\sigma_3} A e^{-\sigma_3}$. 
Define the open domains $\{D_j\}_1^4$ of the complex $k$-plane by (see Figure \ref{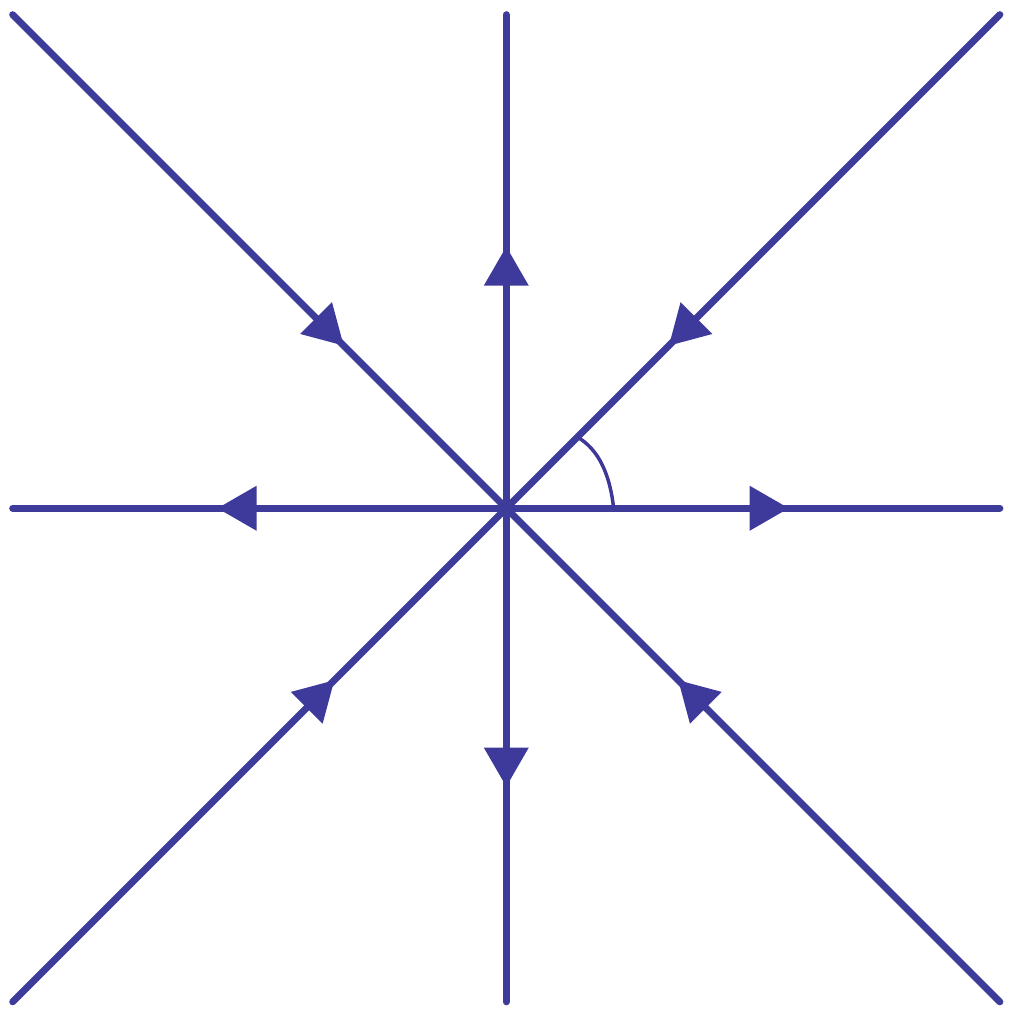})
\begin{align}\nonumber
D_j=\left\{ k\in \C \, |\, \arg{k^2}\in ((j-1)\pi/2,j\pi/2) \right\}, \qquad j=1,...,4.
\end{align}
Let $\Sigma = \{k \in \C \,  | \, k^4 \in \R\}$ denote the contour separating the $D_j$'s, oriented so that $D_1 \cup D_3$ lies to the left as in Figure \ref{Sigma.pdf}.
The functions $a(k)$,  $b(k)$, $A(k)$, and $B(k)$ have the following properties: 
\begin{itemize}
\item $a(k)$ and $b(k)$ are smooth and bounded on $\bar{D}_1 \cup \bar{D}_2$ and analytic in the interior of this set.

\item $a(k)\overline{a(\bar k)}-b(k)\overline{b(\bar k)}=1$ for $k \in \R \cup i\R$.

\item $a(k)=1+O(k^{-1})$ and $b(k)= b_1/k + O(k^{-2})$ uniformly as $k\rightarrow \infty$ in $\bar{D}_1 \cup \bar{D}_2$, where $b_1 = -iu_0(0)/2$.

\item $a(k)$ is an even function and $b(k)$ is an odd function of $k \in \bar{D}_1 \cup \bar{D}_2$.

\item $A(k)$ and $B(k)$ are smooth and bounded on $\bar{D}_1 \cup \bar{D}_3$ and analytic in $D_1 \cup D_3$.
\item $A(k)\overline{A(\bar k)}-B(k)\overline{B(\bar k)}=1$ for $k \in \Sigma$.

\item $A(k)=1+O(k^{-1})$ and $B(k)= B_1/k + O(k^{-2})$ uniformly as $k\rightarrow \infty$ in $\bar{D}_1 \cup \bar{D}_3$, where $B_1 = -ig_0(0)/2$.

\item $A(k)$ is an even function and $B(k)$ is an odd function of $k \in \bar{D}_1 \cup \bar{D}_3$.
\end{itemize}

\begin{figure}
\begin{center}
\begin{overpic}[width=.45\textwidth]{Sigma.pdf}
      \put(77,59){$D_1$}
      \put(35,77){$D_3$}
      \put(57,77){$D_2$}
      \put(18,59){$D_4$}
      \put(18,37){$D_1$}
      \put(35,21){$D_2$}
      \put(57,21){$D_3$}
      \put(77,37){$D_4$}
      \put(61,54){$\pi/4$}
      \put(102,48){$\Sigma$}
      \end{overpic}
     \begin{figuretext}\label{Sigma.pdf}
       The contour $\Sigma$ and the domains $\{D_j\}_1^4$ in the complex $k$-plane.
     \end{figuretext}
     \end{center}
\end{figure}

The initial and boundary values $u_0, g_0, g_1$ cannot all be independently prescribed but must satisfy an important compatibility condition. This compatibility condition is conveniently expressed at the level of the spectral functions as the so-called global relation:
\begin{align}\label{GR}
B(k)a(k) - A(k)b(k) = 0, \qquad k \in \bar{D}_1.
\end{align}
If there exists a solution of (\ref{dnls}) in the quarter plane  $\{x \geq 0, t \geq 0\}$ with sufficient smooth and decay as $x \to \infty$, then the relation (\ref{GR}) holds \cite{L2008}. In other words, we need to assume (\ref{GR}) in order to have a well-posed problem. 
We will also assume that the functions $a(k)$ and $d(k) := a(k)\overline{A(\bar{k})} -  b(k) \overline{B(\bar{k})}$ have no zeros.
More precisely, we make the following assumptions.

\begin{assumption}\label{assumption1}
 We assume that the following conditions are satisfied:
\begin{itemize}
\item the spectral functions $a(k)$, $b(k)$, $A(k)$, $B(k)$ defined in (\ref{abABdef}) satisfy the global relation (\ref{GR}).

\item $a(k)$ and $d(k)$ have no zeros in $\bar{D}_1 \cup \bar{D}_2$ and $\bar{D}_2$, respectively.

\item the initial and boundary values $u_0(x), g_0(t)$, and $g_1(t)$ are compatible with equation (\ref{dnls})  to all orders at $x=t=0$, that is,
\begin{align*}
& g_0(0) = u_0(0), \qquad g_1(0) = u_0'(0), \qquad 
ig_0'(0) + u_0''(0) = i(|u_0|^2u_0)'(0), 
	\\
& ig_1'(0) + u_0'''(0) = i(|u_0|^2u_0)''(0), \quad \text{etc.}
\end{align*}
\end{itemize}
\end{assumption}

\begin{remark}\upshape
 The case when $a(k)$ and $d(k)$ possess a finite number of distinct zeros, can be treated by augmenting the RH problem with a set of residue conditions \cite{L2008}. The possible soliton contributions generated by these residues can easily be computed and added to the formula for the long-time asymptotics; the steps are similar to the analogous procedure for the NLS equation (see Appendix B of \cite{FIS2005}).
 \end{remark}
 
Let $f^*(k) := \overline{f(\bar{k})}$ denote the Schwartz conjugate of a function $f(k)$.
We define the jump matrix $J(x,t,k)$ by 
\begin{align}\label{Jdef}
&J(x,t,k) = \begin{cases} 
 \begin{pmatrix} 1-\frac{bb^*}{aa^*} & \frac{b}{a^*} e^{-t\Phi} \\
 - \frac{b^*}{a} e^{t\Phi}& 1 \end{pmatrix}, & k \in \R,
	\\
 \begin{pmatrix} 1 & 0 \\ -\Upsilon e^{t\Phi} & 1 \end{pmatrix}, & k \in e^{\frac{\pi i}{4}}\R,
	\\ 
	 \begin{pmatrix} 1 & -(\frac{b}{a^*}-\Upsilon^*) e^{-t\Phi} \\
 (\frac{b^*}{a}-\Upsilon)e^{t\Phi}& 1-(\frac{b^*}{a}-\Upsilon) (\frac{b}{a^*} - \Upsilon^*) \end{pmatrix}, & k \in i \R,
	\\ 
 \begin{pmatrix} 1 & \Upsilon^* e^{-t\Phi} \\ 0 & 1 \end{pmatrix}, & k \in e^{\frac{3\pi i}{4}}\R,
\end{cases}
\end{align}
where $\Phi := 4ik^4 + 2ik^2{\frac{x}{t}} $ and $\Upsilon(k)$ is defined by
$$\Upsilon(k) = \frac{ \overline{B(\bar{k})}}{a(k) d(k)}, \qquad k \in \bar{D}_2.$$ 

We will need to formulate RH problems in the $L^2$ sense. To this end, it is convenient to introduce the generalized Smirnoff class $\dot{E}^2(\C\setminus \Sigma)$. The space $\dot{E}^2(\C\setminus \Sigma)$ consists of all functions $f(k)$ analytic in $\C\setminus \Sigma$ with the property that for each component $D$ of $\C\setminus \Sigma$ there exist curves $\{C_n\}_1^\infty$ in $D$ such that the $C_n$ eventually surround each compact subset of $D$ and $\sup_{n \geq 1} \|f\|_{L^2(C_n)} < \infty$.
For $f \in \dot{E}^2(\C\setminus \Sigma)$, we let $f_+$ and $f_-$ denote the nontangential boundary values of $f(k)$ as $k$ approaches $\Sigma$ from the left and right, respectively (these boundary values exist almost everywhere on $\Sigma$). We write $E^\infty(D)$ for the set of bounded analytic functions in $D$.

The arguments of \cite{L2008} show that if Assumption \ref{assumption1} is satisfied, then the RH problem
\begin{align}\label{RHM}
\begin{cases}
M(x, t, \cdot) \in I + \dot{E}^2(\C \setminus \Sigma),\\
M_+(x,t,k) = M_-(x, t, k) J(x, t, k) \quad \text{for a.e.} \ k \in \Sigma,
\end{cases}
\end{align}
has a unique solution for each $(x,t) \in [0,\infty) \times [0, \infty)$ and the nontangential limit\footnote{The notation $\displaystyle{\ntlim_{k \to \infty}}$ indicates the limit as $k \in \C$ goes to $\infty$ nontangentially with respect to $\Sigma$.}
\begin{align}\label{tildeudef}
  \tilde{u}(x,t) := \ntlim_{k \to \infty} \left(k M(x,t, k)\right)_{12}
\end{align}
exists for each $(x,t) \in [0,\infty) \times [0, \infty)$. 
Moreover, the function $u(x,t)$ defined by 
\begin{align}\label{urecover}
& u(x,t) = 2i\tilde{u}(x,t)e^{2i\int^{(x,t)}_{(0, 0)} \Delta}, \qquad x \geq 0, \ t\geq 0,
\end{align}
with
\begin{align}
& \Delta = 2 |\tilde{u}|^2 dx - \left(4|\tilde{u}|^2 + 2i\left(\bar{\tilde{u}}_x\tilde{u} - \tilde{u}_x\bar{\tilde{u}}\right)\right)dt,
\end{align}
is a smooth solution of the DNLS equation (\ref{dnls}) in the quarter plane $\{x \geq 0, t \geq0\}$ such that 
\begin{align}\label{u0g0g1}
\begin{cases} \text{$u(x,0) = u_0(x)$ for $x \geq 0$},
	\\ 
 \text{$u(0,t) = g_0(t)$ and $u_x(0,t) = g_1(t)$ for $t \geq 0$.} 
 \end{cases}
\end{align}

\section{Main result}\label{mainsec}
Our main result gives an explicit formula for the long-time asymptotics of the solution $u(x,t)$ of the DNLS equation on the half-line under the assumption that the initial and boundary values lie in the Schwartz class. 
The formula is valid in the sector displayed in Figure \ref{sector.pdf}.

\begin{theorem}[Asymptotics for the DNLS equation on the half-line]\label{mainth}
  Let $u_0(x)$, $g_0(t)$, $g_1(t)$ be functions in the Schwartz class $\mathcal{S}([0,\infty))$. Define the spectral functions $a(k)$, $b(k)$, $A(k)$, $B(k)$ via (\ref{abABdef}) and suppose Assumption \ref{assumption1} holds. 
  Let $u(x,t)$ be the half-line solution of (\ref{dnls}) defined in (\ref{urecover}) with initial and boundary values 
  $$u(x,0) = u_0(x), \qquad u(0,t) = g_0(t), \qquad u_x(0,t) = g_1(t).$$

Then, for any constant $N > 0$, $u(x,t)$ satisfies the asymptotic formula
\begin{align}\label{uasymptotics}
u(x,t) = \frac{u_a(x,t)}{\sqrt{t}} + O\bigg(\frac{\ln t}{t}\bigg), \qquad t \to \infty, \ 0 \leq x \leq Nt,
\end{align}
where the error term is uniform with respect to $x$ in the given range, and the leading-order coefficient $u_a(x,t)$ is defined as follows: 
Let $\zeta = x/t$ and define $\lambda_0 := \lambda_0(\zeta) \leq 0$ by
$$\lambda_0 = -\frac{\zeta}{4} = -\frac{x}{4t}.$$
Define the spectral function $r(\lambda)$ for $\lambda=k^2 \leq 0$ in terms of the functions $a(k)$, $b(k)$, $A(k)$, $B(k)$  by equation (\ref{hrdef}).
Define the function $\nu(\zeta)$ for $\zeta \leq 0$ by
$$\nu(\zeta) = \frac{1}{2\pi} \ln(1 - \lambda_0 |r(\lambda_0)|^2) \geq 0.$$

Then 
\begin{align*}
u_a(x,t) = &\; \sqrt{\frac{\nu(\zeta)}{2|\lambda_0|}}e^{i\alpha(\zeta,t)}, 
\end{align*}
where the phase $\alpha(\zeta, t)$ is given by
\begin{align*}
\alpha(\zeta,t)= &\;
\frac{\pi}{4} - \arg r(\lambda_0) - \arg \Gamma(i\nu(\zeta)) + \nu(\zeta) \ln(8t) 
	\\
& + \frac{1}{\pi} \int_{-\infty}^{\lambda_0} \ln (\lambda_0-s) d\ln(1 - s|r(s)|^2) + 4t\lambda_0^2 
 	\\
&+ \frac{1}{\pi}\int_0^{|\lambda_0|} \frac{\ln(1+s|r(-s)|^2)}{s}ds 
+ 2\int_0^t\bigg(\frac{3}{4}|g_0|^4 - \frac{i}{2}(\bar{g}_1g_0 - g_1\bar{g}_0)\bigg)dt'.
\end{align*}
\end{theorem}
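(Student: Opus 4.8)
The plan is to prove Theorem \ref{mainth} by a nonlinear steepest descent analysis of the RH problem (\ref{RHM}), following the Deift--Zhou scheme as adapted to half-line problems in \cite{L2016}. Writing $\zeta = x/t$, the phase $\Phi = 4ik^4 + 2ik^2\zeta$ becomes, in the variable $\lambda = k^2$, equal to $4i\lambda^2 + 2i\lambda\zeta$, whose stationary point $\lambda_0 = -\zeta/4$ is the critical point that governs the asymptotics throughout the sector $0 \le x \le Nt$. Because the off-diagonal entries of $J$ only decay like $1/k$ as $k \to \infty$, a direct passage to the variable $\lambda$ fails; so I would first apply (in Section \ref{specsec}) a transformation $M \mapsto M^{(1)}$ rendering these entries $O(k^{-2})$, after which the jump is well-defined and square integrable in the $\lambda$-plane.

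Next (Section \ref{transsec}) I would perform the steepest descent deformations. The central tool is a scalar function $\delta(\lambda)$ solving a scalar RH problem across $(-\infty, \lambda_0)$ with jump $1 - \lambda|r(\lambda)|^2$; conjugating by $\delta^{\sigma_3}$ removes the diagonal factor of the jump and splits the off-diagonal entries so that, after contour deformation, they decay exponentially along the steepest-descent rays through $\lambda_0$. The jumps across the diagonal lines $e^{\pi i/4}\R$ and $e^{3\pi i/4}\R$, expressed through the spectral function $h(\lambda)$, require introducing analytic approximations of $h$ and $r$ that respect their $O(1/k)$ behavior at infinity, together with an additional deformation so that these jumps become exponentially small away from $\lambda_0$. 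The global relation (\ref{GR}) enters here decisively: it forces $r(\lambda)$ to vanish to all orders at $\lambda = 0$, which lets the contributions from the origin be absorbed into the error term and produces the rapid decay of $u_a$ as $x/t \to 0$.

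Once the problem is localized to a small disk around $\lambda_0$, I would match the local behavior with a parabolic-cylinder (Weber) model RH problem whose solution is explicit (Section \ref{localsec}). The model contributes the amplitude $\sqrt{\nu(\zeta)/(2|\lambda_0|)}$ together with the phase $\alpha(\zeta,t)$: the terms $\nu(\zeta)\ln(8t)$ and $-\arg\Gamma(i\nu(\zeta))$ arise from the standard large-parameter asymptotics of the parabolic-cylinder functions, while the integral terms in $\alpha$ come from the large-$\lambda$ normalization of $\delta$ and from the one-form $\Delta$ in (\ref{urecover}). Tracking all errors --- dominated by the $O(\ln t/t)$ discrepancy between the local model contribution and the genuine solution --- then yields (\ref{uasymptotics}) in Section \ref{finalsec}, with uniformity in $x$ over $0 \le x \le Nt$.

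I expect the main obstacle to be the simultaneous control of the diagonal-line jumps (challenge $(a)$) and the slow decay at infinity (challenge $(b)$). Constructing analytic approximations of $h$ and $r$ that are accurate near $\lambda_0$, decay suitably along the deformed contours, and faithfully encode the $O(1/k)$ behavior at infinity --- while guaranteeing that the resulting small-norm RH problem has error $O(\ln t / t)$ uniformly down to $x = 0$ --- is the delicate technical heart of the argument.
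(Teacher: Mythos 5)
Your proposal is correct and follows essentially the same route as the paper: the preliminary transformation making the jump entries $O(k^{-2})$ before passing to $\lambda = k^2$, the deformation through the critical point $\lambda_0$, the $\delta$-conjugation across $(-\infty,\lambda_0)$, the analytic approximations of $h$ and $r$ adapted to their decay at infinity, the parabolic-cylinder local model, and the final small-norm argument in which the one-form $\Delta$ supplies the remaining integral terms in the phase $\alpha$. The only slight imprecision is the role you assign to the global relation: in the paper its consequence (that $r$ vanishes to all orders at $\lambda = 0$) is used to keep the local-model contributions and error terms, which carry factors of $q/|\lambda_0|^{1/2} = r(\lambda_0)$, uniformly controlled as $\zeta \to 0$, rather than to absorb contributions from the origin into the error term.
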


\begin{remark}[Effect of the boundary]\upshape\label{effectremark}
Theorem \ref{mainth} can be used to understand the influence of the boundary on the asymptotic behavior of the solution $u(x,t)$. 
The asymptotic formula (\ref{uasymptotics}), which is valid uniformly for $0 \leq x \leq Nt$, has the leading order term $u_a(x,t)t^{-1/2}$ where $u_a(x,t)$ is defined in terms of the spectral function $r(\lambda)$. 
If $u^L(x,t)$ denotes the solution of the initial value problem for equation (\ref{dnls}) on the line (i.e., in the absence of a boundary), then the asymptotics of  $u^L(x,t)$ is given by a formula similar to equation (\ref{uasymptotics}) except that $r(\lambda)$ is replaced by another spectral function  $r^L(\lambda)$ whose definition involves only the initial data. In general, the solution $u^L(x,t)$ decays like $t^{-1/2}$ in a sector around $x = 0$. On the other hand, in Theorem \ref{mainth}, we have assumed that a boundary is present at $x = 0$ and that $u(0,t)$ has rapid decay as $t \to \infty$. This rapid decay can only be consistent with formula (\ref{uasymptotics}) provided that $\zeta^{-1} \nu(\zeta) \to 0$ as $\zeta \to 0$. 
In fact, it turns out that the global relation (\ref{GR}) forces $\nu(\zeta)$ to vanish to all orders at $\zeta = 0$, see Remark \ref{r0remark}. This shows that not only is $u_a(0,t) = 0$, but in fact the leading-order coefficient $u_a(x,t)$ in (\ref{uasymptotics}) satisfies $u_a(x,t) = O(|x/t|^n)$ for every $n \geq 1$ as $x/t \to 0$. This illustrates the effect of the boundary on the long-time behavior of the solution.
\end{remark}

The remainder of the paper is devoted to the proof of Theorem \ref{mainth}. We henceforth assume $u_0(x)$, $g_0(t)$, $g_1(t)$ are given functions in $\mathcal{S}([0,\infty))$ with associated spectral functions $a(k)$, $b(k)$, $A(k)$, $B(k)$ such that Assumption \ref{assumption1} holds. We also fix $N >0$ and let $\mathcal{I}$ denote the interval $\mathcal{I} = (0,N]$.
We let $\{\mathcal{D}_j\}_1^4$ denote the four open quadrants of the complex $\lambda$-plane (see Figure \ref{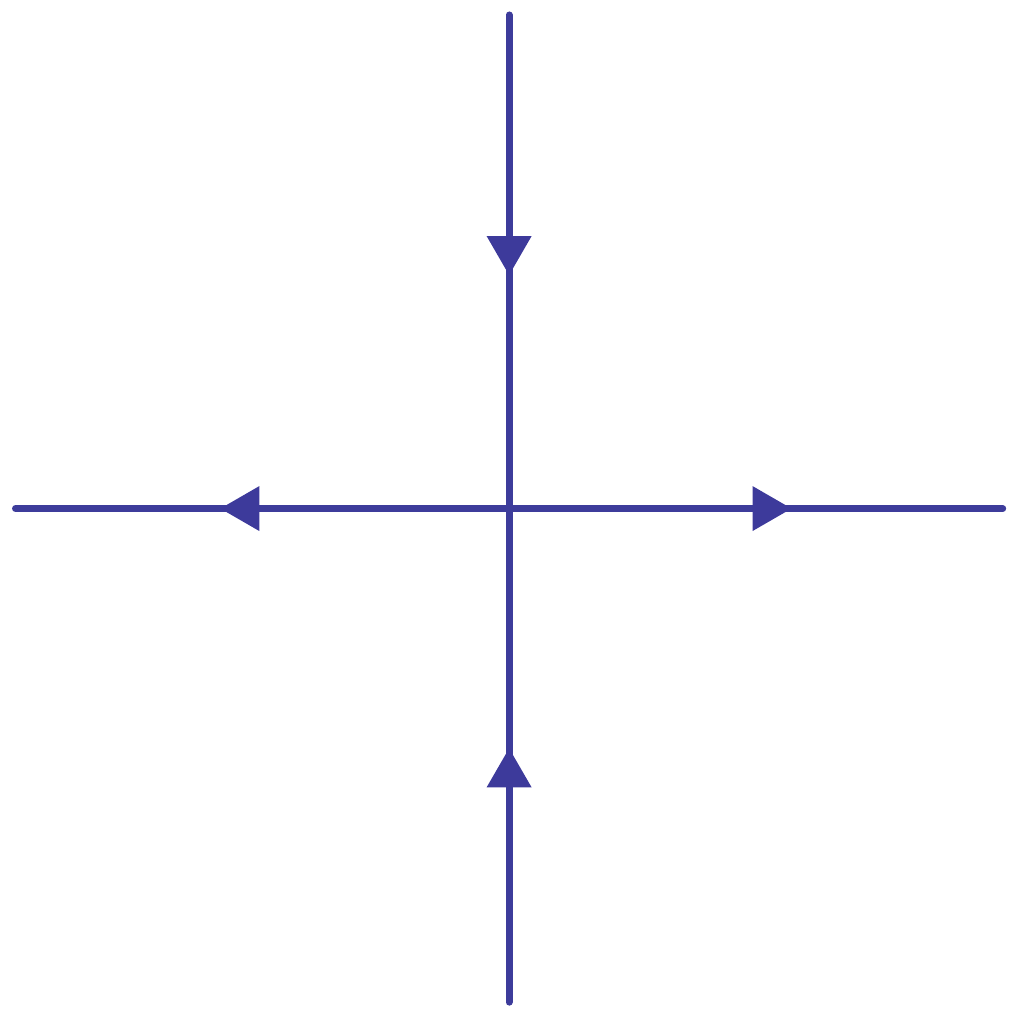})
\begin{align}\nonumber
\mathcal{D}_j=\left\{\lambda \in \C \, |\, \arg\lambda \in ((j-1)\pi/2,j\pi/2) \right\}, \qquad j=1,...,4,
\end{align}
and let $\Gamma = \R \cup i\R$ denote the contour separating the ${\mathcal D}_j$ oriented as in Figure \ref{Gamma.pdf}.

\begin{figure}
\begin{center}
\begin{overpic}[width=.45\textwidth]{Gamma.pdf}
      \put(74,74){$\mathcal{D}_1$}
      \put(20,74){$\mathcal{D}_2$}
      \put(20,22){$\mathcal{D}_3$}
      \put(74,22){$\mathcal{D}_4$}
      \put(102,48){$\Gamma$}
      \end{overpic}
     \begin{figuretext}\label{Gamma.pdf}
       The contour $\Gamma = \R \cup i\R$ and the quadrants $\{\mathcal{D}_j\}_1^4$ of the complex $\lambda$-plane.
     \end{figuretext}
     \end{center}
\end{figure}

\section{A new spectral parameter}\label{specsec}
The proof of Theorem \ref{mainth} relies on a nonlinear steepest descent analysis of the RH problem (\ref{RHM}). Although it is possible to perform this analysis directly in the complex $k$-plane, the symmetries
\begin{align}\label{abABsymm}
a(k) = a(-k), \quad b(k) = -b(-k), \quad A(k) = A(-k), \quad B(k) = -B(-k),
\end{align}
suggest that it is convenient to introduce a new spectral variable $\lambda$ by $\lambda = k^2$. This change of spectral parameter appeared already in \cite{KN1978} and was employed in \cite{XF2012} for the analysis of the IVP for (\ref{dnls}). One advantage of working with $\lambda$ is that we only have to analyze a single critical point of $\Phi (\zeta,\lambda)$ in order to find the long-time asymptotics. 

\subsection{The solution $M^{(1)}$}
One difference between the IVP and the IBVP is that the reflection coefficient which enters the RH problem for the IVP decays rapidly as $k \to \infty$. In contrast, for the half-line problem, the analogous coefficients typically only decay like $1/k$ as $k \to \infty$. Therefore, before we introduce the new parameter $\lambda$, we transform the RH problem (\ref{RHM}) by defining the sectionally analytic function $M^{(1)}$ by
$$M^{(1)}(x,t,k) = M(x,t,k)G(x,t,k),$$
where
$$G(x,t,k) = \begin{cases}
\begin{pmatrix} 1 & 0 \\ \frac{k\bar{b}_1}{k^2 + i} e^{t\Phi} & 1 \end{pmatrix}, & k \in D_1, 
	\\
\begin{pmatrix} 1 & \frac{kb_1}{k^2 - i} e^{-t\Phi} \\ 0 & 1 \end{pmatrix}, & k \in D_4, 
 	\\
I, & \text{elsewhere}.
\end{cases}
$$
The factor $\frac{k\bar{b}_1}{k^2 + i}$ is an odd function of $k$ which is analytic in $D_1$ (the poles lie at $k = \pm e^{\frac{3\pi i}{4}}$) and such that
$$\frac{k\bar{b}_1}{k^2 + i} = \frac{\bar{b}_1}{k} + O\bigg(\frac{1}{k^2}\bigg), \qquad k \to \infty.$$
Similarly, $\frac{k b_1}{k^2 - i}$ is an odd analytic function of $k$ in $D_4$ such that
$$\frac{kb_1}{k^2 - i} = \frac{b_1}{k} + O\bigg(\frac{1}{k^2}\bigg), \qquad k \to \infty.$$
In particular,
$$G(x,t,\cdot) \in I + (\dot{E}^2 \cap E^\infty)(\C \setminus \Sigma).$$
On the other hand, the compatibility condition $u_0(0) = g_0(0)$ implies 
$$b_1 = B_1 = -\frac{iu_0(0)}{2}.$$
It follows that $M(x,t,k)$ satisfies the RH problem (\ref{RHM}) if and only if $M^{(1)}(x,t,k)$ satisfies the RH problem
\begin{align}\label{RHM1}
\begin{cases}
M^{(1)}(x, t, \cdot) \in I + \dot{E}^2(\C \setminus \Sigma),\\
M_+^{(1)}(x,t,k) = M_-^{(1)}(x, t, k) J^{(1)}(x, t, k) \quad \text{for a.e.} \ k \in \Sigma,
\end{cases}
\end{align}
where $J^{(1)}$ is given by
\begin{align}\label{J1def}
&J^{(1)}(x,t,k) = \begin{cases} 
 \begin{pmatrix} 1- (\frac{b}{a^*} - \frac{k b_1}{k^2 - i})(\frac{b^*}{a} - \frac{k\bar{b}_1}{k^2 + i}) & (\frac{b}{a^*} - \frac{k b_1}{k^2 - i}) e^{-t\Phi} \\
 - (\frac{b^*}{a} - \frac{k\bar{b}_1}{k^2 + i})e^{t\Phi}& 1 \end{pmatrix}, & k \in \R,
	\\
 \begin{pmatrix} 1 & 0 \\ -(\Upsilon - \frac{k\bar{b}_1}{k^2 + i}) e^{t\Phi} & 1 \end{pmatrix}, & k \in e^{\frac{\pi i}{4}}\R,
	\\ 
 \begin{pmatrix} 1 & -(\frac{b}{a^*}-\Upsilon^*) e^{-t\Phi} \\
 (\frac{b^*}{a}-\Upsilon)e^{t\Phi}& 1-(\frac{b^*}{a}-\Upsilon) (\frac{b}{a^*} - \Upsilon^*) \end{pmatrix}, & k \in i \R,
	\\ 
 \begin{pmatrix} 1 & (\Upsilon^* - \frac{k b_1}{k^2 - i}) e^{-t\Phi} \\ 0 & 1 \end{pmatrix}, & k \in e^{\frac{3\pi i}{4}}\R,
\end{cases}
\end{align}
The jump matrix $J^{(1)}$ has the advantage that the off-diagonal entries are $O(k^{-2})$ as $k \to \infty$.

\subsection{The new parameter}
The relations (\ref{abABsymm}) imply that the solution $M^{(1)}(x,t,k)$ of (\ref{RHM1}) obeys the symmetry 
$$M^{(1)}(x,t,k) = \sigma_3 M^{(1)}(x,t,-k) \sigma_3, \qquad k \in \C \setminus \Sigma.$$ 
Hence, letting $\lambda = k^2$, we can define $m(x,t,\lambda)$ by the equation
\begin{align}\label{mdef}
m(x,t,\lambda) = \begin{pmatrix} 1 & 0 \\ -\overline{\tilde{u}(x,t)} & 1 \end{pmatrix} k^{-\frac{\hat{\sigma}_3}{2}} M^{(1)}(x,t,k), \qquad \lambda \in \C \setminus \Gamma.
\end{align}
The factor $k^{-\frac{\hat{\sigma}_3}{2}}$ is included in (\ref{mdef}) in order to make the right-hand side an even function of $k$; the matrix involving $\tilde{u}$ is included to ensure that $m \to I$  as $\lambda \to \infty$. 

Let $\R_+ = [0,\infty)$ and $\R_- = (-\infty,0]$ denote the positive and negative halves of the real axis.
The relations (\ref{abABsymm}) show that we may define functions  $h(\lambda)$, $r_1(\lambda)$, and $r(\lambda)$ by
\begin{subequations}\label{hrdef}
\begin{align}
& h(\lambda) = -\frac{\Upsilon(k)}{k} + \frac{\bar{b}_1}{k^2 + i}, \qquad \lambda \in \bar{\mathcal{D}}_2,
	\\ 
& r_1(\lambda) = \frac{\overline{b({\bar k})}}{ka(k)} - \frac{\bar{b}_1}{k^2 + i}, \qquad \lambda \in \R,
	\\	
& r(\lambda) = r_1(\lambda) + h(\lambda), \qquad \lambda \in \R_-.
\end{align}
\end{subequations}
These functions possess the following properties:
\begin{itemize}
\item $h(\lambda)$ is smooth and bounded on $\bar{\mathcal{D}}_2$ and analytic in $\mathcal{D}_2$.
\item $r_1(\lambda)$ is smooth and bounded on $\R$.
\item $r(\lambda)$ is smooth and bounded on $(-\infty,0]$.
\item There exist complex constants $\{h_j\}_{j=2}^{\infty}$ and $\{r_j\}_{2=1}^{\infty}$ such that, for any $N \geq 1$, 
\begin{subequations}
\begin{align}\label{hexpansion}
& h(\lambda) = \sum_{j=2}^N \frac{h_j}{\lambda^j} + O\Big(\frac{1}{\lambda^{N+1}}\Big), \qquad \lambda \to \infty, \ \lambda \in \bar{\mathcal{D}}_2,
	\\
& r(\lambda) = \sum_{j=2}^N \frac{r_j}{\lambda^j}+O\Big(\frac{1}{\lambda^{N+1}}\Big), \qquad |\lambda| \to \infty, \ \lambda \in \R_-,
\end{align}
\end{subequations}
and these expansions can be differentiated termwise any number of times. 
\end{itemize}

It follows that the RH problem (\ref{RHM1}) for $M^{(1)}(x,t,k)$ can be rewritten in terms of $m(x,t,\lambda)$ as
\begin{align}\label{RHm}
\begin{cases}
m(x, t, \cdot) \in I + \dot{E}^2(\C \setminus \Gamma),\\
m_+(x,t,\lambda) = m_-(x, t, \lambda) v(x, t, \lambda) \quad \text{for a.e.} \ \lambda \in \Gamma,
\end{cases}
\end{align}
where the jump matrix $v = k^{-\frac{\hat{\sigma}_3}{2}} J^{(1)}$ is given by
\begin{align}\label{vdef}
&v(x,t,\lambda) = \begin{cases} 
 \begin{pmatrix} 1- \lambda r_1 r_1^* & r_1^* e^{-t\Phi} \\
 - \lambda r_1 e^{t\Phi}& 1 \end{pmatrix}, & \lambda \in \R_+,
	\\ 
 \begin{pmatrix} 1 & 0 \\ \lambda h e^{t\Phi} & 1 \end{pmatrix}, & \lambda \in i\R_+,
	\\
	 \begin{pmatrix} 1 & -r^* e^{-t\Phi} \\
 \lambda r e^{t\Phi}& 1- \lambda rr^* \end{pmatrix}, & \lambda \in \R_-,
	\\ 
 \begin{pmatrix} 1 & -h^* e^{-t\Phi} \\ 0 & 1 \end{pmatrix}, & \lambda \in i\R_-,
\end{cases}
\end{align}
with $\Phi(\zeta, \lambda)$ defined by
$$\Phi(\zeta, \lambda) = 4i\lambda^2 + 2i\zeta \lambda.$$
In terms of $m$, equation (\ref{tildeudef}) can be expressed as
\begin{align}\label{tildeurecover}
\tilde{u}(x,t) = \ntlim_{\lambda \to \infty} \left(\lambda m(x,t, \lambda)\right)_{12}.
\end{align}

\begin{remark}[Behavior of $r(\lambda)$ as $\lambda \to 0$]\label{r0remark}\upshape
We claimed in Remark \ref{effectremark} that the global relation implies that $r(\lambda)$ vanishes to all orders at $\lambda = 0$, i.e., that $r^{(j)}(0) = 0$ for all $j = 0, 1, \dots$. We are now in a position to prove this.
Using the fact that $aa^* - bb^*=1$ for $k \in \R \cup i\R$, simplification of the definition (\ref{hrdef}) of $r(\lambda)$ shows that
$$kr(k^2) = \frac{\overline{c(\bar{k})}}{d(k)},\qquad k \in i \R,$$
where $c(k) = b(k)A(k) - B(k)a(k)$. 
The functions $a(k)$ and $b(k)$ are analytic for $\im k > 0$ and, for each $j \geq 0$, the derivatives $a^{(j)}(k), b^{(j)}(k)$ have continuous extensions to $\im k \geq 0$. Similarly, the functions $A(k)$ and $B(k)$ are analytic for $k \in D_1 \cup D_3$ and, for each $j \geq 0$, the derivatives $A^{(j)}(k), B^{(j)}(k)$ have continuous extensions to $\bar{D}_1 \cup \bar{D}_3$. (These properties follow from an analysis of the Volterra equations satisfied by $\mathsf{X}$ and $\mathsf{T}$ together with the assumption that the initial and boundary values belong to the Schwartz class; see \cite{LNonlinearFourier} for detailed proofs of such properties in the similar case of the modified KdV equation.) We infer that the functions $c(k)$ and $d(k)$ have smooth extensions to $\R \cup i\R$, where smooth means that the restrictions $c|_\R$, $c|_{i\R}$, $d|_\R$, $d|_{i\R}$ are $C^\infty$ and the derivatives along $\R$ and $i\R$ are consistent to all orders at $k = 0$, i.e., for $s$ real,
$$\frac{d^n}{ds^n}\bigg|_{s=0}c(is) = i^n\frac{d^n}{ds^n}\bigg|_{s=0}c(s), \quad
\frac{d^n}{ds^n}\bigg|_{s=0}d(is) = i^n\frac{d^n}{ds^n}\bigg|_{s=0}d(s), \qquad \ n = 1,2,\dots.$$
In terms of the functions $\mathsf{X}(0,k)$ and $\mathsf{T}(0,k)$ defined in (\ref{abABdef}), we can write
$$\mathsf{T}(0,k)^{-1}\mathsf{X}(0,k) = \begin{pmatrix} 
\overline{d(\bar{k})} 	&	c(k)	\\
\overline{c(\bar{k})}	&	d(k)
\end{pmatrix}.$$
Since $\det \mathsf{X}(0,k) = \det \mathsf{T}(0,k) = 1$ for $k \in \R$, we conclude that
$$|d(k)|^2 = 1 + |c(k)|^2,  \qquad k \in \R.$$
In particular, $d(k)$ is nonzero on $\R$. By Assumption \ref{assumption1}, $d(k)$ is also nonzero on $i\R$.
It follows that $\overline{c(\bar{k})}/d(k)$ is a smooth function $\R \cup i\R \to \C$.
However, the global relation (\ref{GR}) implies that $c(k)$ vanishes identically for $k \in \R$. 
This shows that $k \mapsto kr(k^2) = \overline{c(\bar{k})}/d(k)$ vanishes to all orders at $k = 0$; hence $r(\lambda)$ vanishes to all orders at $\lambda = 0$.
\end{remark}

\section{Transformations of the RH problem}\label{transsec}
By performing a number of transformations, we can bring the RH problem (\ref{RHm}) to a form suitable for determining the long-time asymptotics. More precisely, starting with $m$, we will define functions $m^{(j)}(x,t,\lambda)$, $j = 1, 2,3$, such that the RH problem satisfied by $m^{(j)}$ is equivalent to the original RH problem (\ref{RHm}). The RH problem for $m^{(j)}$ has the form
\begin{align}\label{RHmj}
\begin{cases}
m^{(j)}(x, t, \cdot) \in I + \dot{E}^2(\C \setminus \Gamma^{(j)}),\\
m^{(j)}_+(x,t,\lambda) = m^{(j)}_-(x, t, \lambda) v^{(j)}(x, t, \lambda) \quad \text{for a.e.} \ \lambda \in \Gamma^{(j)},
\end{cases}
\end{align}
where the contours $\Gamma^{(j)}$ and jump matrices $v^{(j)}$ are specified below. 
The jump matrix $v^{(3)}$ obtained after the third transformation has the property that it approaches the identity matrix as $t \to \infty$ everywhere on the contour except near a single critical point $\lambda_0$. This means that we can find the long-time asymptotics of $m^{(3)}$ by computing the contribution from a small cross centered at $\lambda_0$. 

\subsection{First transformation}
\begin{figure}
\begin{center}
\begin{overpic}[width=.55\textwidth]{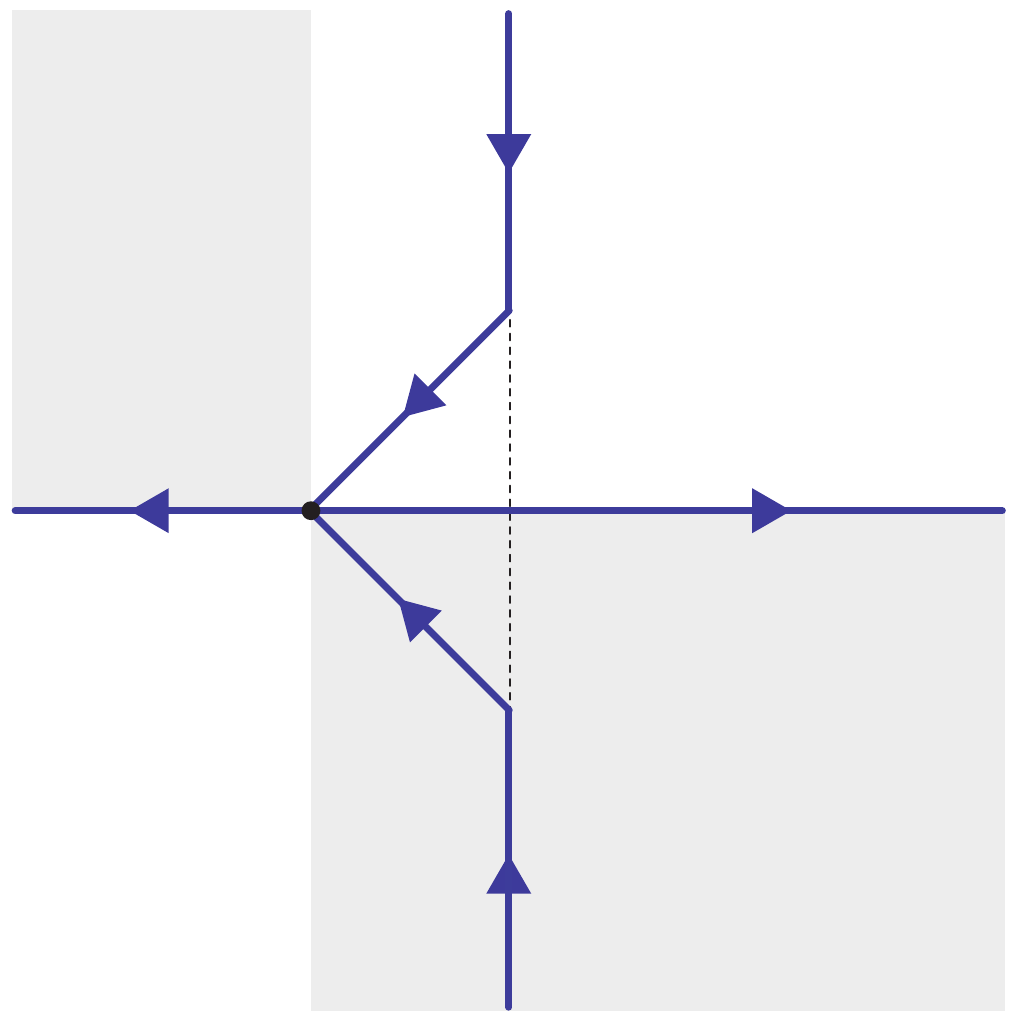}
      \put(102,48){$\Gamma^{(1)}$}
      \put(68,78){\small $\re \Phi < 0$}
      \put(68,20){\small $\re \Phi > 0$}
      \put(42,54){\small $U_1$}
      \put(42,43){\small $U_2$}
      \put(75,52){\small $1$}
      \put(37,60){\small $2$}
      \put(45,84){\small $2$}
      \put(13,52){\small $3$}
      \put(37,36){\small $4$}
      \put(45,12){\small $4$}
      \put(27,45){\small $\lambda_0$}
    \end{overpic}
     \begin{figuretext}\label{Gamma1.pdf}
       The contour $\Gamma^{(1)}$ in the complex $\lambda$-plane. The region where $\re \Phi > 0$ is shaded. 
     \end{figuretext}
     \end{center}
\end{figure}
By solving the equation $\frac{\partial \Phi}{\partial \lambda} = 0$, we see that the jump matrix $v(x,t,\lambda)$ defined in (\ref{vdef}) admits a single critical point as $t \to \infty$ located at 
$$\lambda_0 := \lambda_0(\zeta) = -\frac{\zeta}{4} < 0.$$ 
The purpose of the first transformation is to deform the vertical part of $\Gamma$ so that it passes through the critical point $\lambda_0$. Letting $U_1$ and $U_2$ denote the triangular domains shown in Figure \ref{Gamma1.pdf}, we define $m^{(1)}(x,t,\lambda)$ by
$$m^{(1)}(x,t,\lambda) = m(x,t,\lambda) \times \begin{cases} 
 \begin{pmatrix} 1 & 0 \\ \lambda h(\lambda) e^{t\Phi(\zeta, \lambda)} & 1 \end{pmatrix}, & \lambda \in U_1, 
 	\\
 \begin{pmatrix} 1 & h^*(\lambda) e^{-t\Phi(\zeta, \lambda)} \\ 0 & 1 \end{pmatrix}, & \lambda \in U_2, 
 	\\
I, & \text{elsewhere}.
\end{cases}$$
Since $h e^{t\Phi}$ and $\lambda h^* e^{-t\Phi}$ are bounded and analytic functions of $\lambda \in U_1$ and $\lambda \in U_2$, respectively, we infer that $m$ satisfies the RH problem (\ref{RHm}) if and only if $m^{(1)}$ 
satisfies the RH problem (\ref{RHmj}) with $j = 1$, where the contour $\Gamma^{(1)}$  is as in Figure \ref{Gamma1.pdf} and the jump matrix $v^{(1)}$ is given by
\begin{align*}\nonumber
&v_1^{(1)} =  \begin{pmatrix} 1-\lambda r_1 r_1^* & r_1^* e^{-t\Phi} \\ - \lambda r_1 e^{t\Phi}& 1 \end{pmatrix}, 
\qquad
v_2^{(1)} = \begin{pmatrix} 1 & 0 \\ \lambda h e^{t\Phi} & 1 \end{pmatrix}, 
	\\ 
& v_3^{(1)} = \begin{pmatrix} 1 & -r^* e^{-t\Phi} \\ \lambda r e^{t\Phi}& 1- \lambda rr^* \end{pmatrix}, 
\qquad
v_4^{(1)} = \begin{pmatrix} 1 & -h^* e^{-t\Phi} \\ 0 & 1 \end{pmatrix}.
\end{align*}
Here $v_j^{(1)}$ denotes the restriction of $v^{(1)}$ to the contour labeled by $j$ in Figure \ref{Gamma1.pdf}.

\subsection{Second transformation} 
The jump matrix $v^{(1)}$ has the wrong factorization for $\lambda \in (-\infty, \lambda_0)$. Hence we introduce $m^{(2)}$ by
$$m^{(2)}(x,t,\lambda) = m^{(1)}(x,t,\lambda) \delta(\zeta, \lambda)^{-\sigma_3},$$
where the complex-valued function $\delta(\zeta, \lambda)$ is defined by
$$\delta(\zeta, \lambda) = e^{\frac{1}{2\pi i}\int_{-\infty}^{\lambda_0} \frac{\ln(1 - s |r(s)|^2)}{s-\lambda} ds}, \qquad \lambda \in \C \setminus (-\infty, \lambda_0].$$

\begin{lemma}\label{deltalemma}
For each $\zeta \in \mathcal{I}$, the function $\delta(\zeta, \lambda)$ has the following properties:
\begin{enumerate}[$(a)$]

\item $\delta(\zeta, \lambda)$ and $\delta(\zeta, \lambda)^{-1}$ are bounded and analytic functions of $\lambda \in \C \setminus (-\infty, \lambda_0]$ with continuous boundary values on $(-\infty, \lambda_0)$.

\item $\delta$ obeys the symmetry 
$$\delta(\zeta, \lambda) = \overline{\delta(\zeta, \bar{\lambda})}^{-1}, \qquad \lambda \in \C \setminus (-\infty, \lambda_0].$$

\item Across the subcontour $(-\infty, \lambda_0)$ of $\Gamma^{(1)}$ oriented as in Figure \ref{Gamma1.pdf}, 
$\delta$ satisfies the jump condition 
$$\delta_+(\zeta, \lambda) = \frac{\delta_-(\zeta, \lambda)}{1 - \lambda |r(\lambda)|^2}, \qquad \lambda \in (-\infty, \lambda_0).$$

\item As $\lambda$  goes to infinity,  
\begin{align}\label{deltaasymptotics}
\delta(\zeta, \lambda) = 1 + O(\lambda^{-1}), \qquad \lambda \to \infty,
\end{align}
uniformly with respect to $\arg \lambda \in [0,2\pi]$. 
\end{enumerate}
\end{lemma}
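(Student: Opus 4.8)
The plan is to write $\delta(\zeta,\lambda) = e^{\chi(\zeta,\lambda)}$ with
$$\chi(\zeta,\lambda) = \frac{1}{2\pi i}\int_{-\infty}^{\lambda_0}\frac{\eta(s)}{s-\lambda}\,ds, \qquad \eta(s) := \ln\bigl(1 - s|r(s)|^2\bigr),$$
and to read off all four properties from the elementary theory of scalar Cauchy integrals. The first thing I would record is that, because $\lambda_0 < 0$ and $s \leq \lambda_0$ on the contour, one has $1 - s|r(s)|^2 = 1 + |s||r(s)|^2 \geq 1$, so $\eta$ is real-valued and nonnegative. Since $r$ is smooth and bounded on $(-\infty,0]$, the function $\eta$ is bounded; and since the expansion (\ref{hexpansion}) gives $r(s) = O(s^{-2})$, we get $\eta(s) = O(|s|^{-3})$ as $s\to-\infty$. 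In particular $\eta \in L^1 \cap L^\infty$, so $\chi$ is a well-defined analytic function of $\lambda \in \C\setminus(-\infty,\lambda_0]$.

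For part $(a)$, analyticity off the cut is immediate by differentiation under the integral sign, and continuity of the boundary values on the open interval $(-\infty,\lambda_0)$ follows from the Plemelj formula since $\eta$ is smooth there. The one point needing a short computation is the boundedness of $\delta^{\pm 1} = e^{\pm\re\chi}$. Writing $\lambda = a+ib$ and using that $s$ is real, I would compute
$$\re\chi(\zeta,\lambda) = \frac{\im\lambda}{2\pi}\int_{-\infty}^{\lambda_0}\frac{\eta(s)}{|s-\lambda|^2}\,ds,$$
so that, bounding $\eta$ by $\|\eta\|_{L^\infty}$ and invoking the Poisson-kernel identity $\int_{\R}\frac{|b|}{(s-a)^2+b^2}\,ds = \pi$, one obtains $|\re\chi(\zeta,\lambda)| \leq \tfrac12\|\eta\|_{L^\infty}$. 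This yields $e^{-\frac12\|\eta\|_{L^\infty}} \leq |\delta(\zeta,\lambda)| \leq e^{\frac12\|\eta\|_{L^\infty}}$ uniformly in $\lambda$ (and in $\zeta \in \mathcal{I}$), which is exactly the bound on $\delta$ and $\delta^{-1}$ required in $(a)$.

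Parts $(b)$ and $(c)$ are then formal consequences of the reality of $\eta$ and the Sokhotski--Plemelj relation. For $(b)$, conjugating the integral and replacing $\lambda$ by $\bar\lambda$ sends $\frac{1}{2\pi i(s-\bar\lambda)}$ to $-\frac{1}{2\pi i(s-\lambda)}$, whence $\overline{\chi(\zeta,\bar\lambda)} = -\chi(\zeta,\lambda)$ and therefore $\overline{\delta(\zeta,\bar\lambda)} = \delta(\zeta,\lambda)^{-1}$. For $(c)$, the jump of a Cauchy integral gives $\chi_+ - \chi_- = -\eta$ across $(-\infty,\lambda_0)$ with the orientation fixed in Figure \ref{Gamma1.pdf}, and exponentiating produces $\delta_+ = \delta_-(1-\lambda|r(\lambda)|^2)^{-1}$.

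Finally, for $(d)$, expanding the Cauchy kernel as $\frac{1}{s-\lambda} = -\frac{1}{\lambda} + O(s/\lambda^2)$ shows $\chi(\zeta,\lambda) = O(\lambda^{-1})$, hence $\delta = 1 + O(\lambda^{-1})$, whenever $\lambda\to\infty$ in a direction bounded away from the cut. The delicate point, which I expect to be the main obstacle, is making this estimate uniform in $\arg\lambda \in [0,2\pi]$, i.e.\ including the directions in which $\lambda$ approaches the integration contour at infinity; there the naive bound on $1/|s-\lambda|$ breaks down. The resolution is to split the integral into the range $|s-\re\lambda| > |\lambda|/2$, controlled by $\eta\in L^1$, and a near-diagonal range, where the rapid decay $\eta(s) = O(|s|^{-3})$ coming from (\ref{hexpansion}) (together with a principal-value cancellation when one passes to boundary values) keeps the contribution $O(\lambda^{-1})$. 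It is precisely this fast decay of $\eta$ at infinity, inherited from the behaviour of $r$, that underlies both the uniform asymptotics in $(d)$ and the integrability used throughout the argument.
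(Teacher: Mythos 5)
Your proof is correct and takes essentially the same approach as the paper: the paper's own proof merely observes that $\ln(1-s|r(s)|^2)$ is smooth on $(-\infty,\lambda_0]$ with $O(|s|^{-3})$ decay as $s\to-\infty$ and then declares the rest ``standard''. Your argument is precisely that standard scalar Cauchy-integral/Plemelj argument, with the details (Poisson-kernel bound for boundedness, conjugation symmetry, jump relation, and the splitting needed for the uniform $O(\lambda^{-1})$ expansion near the cut) written out explicitly.
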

\begin{proof}
Note that $\ln(1- s|r(s)|^2)$ is a smooth function of $s \in (-\infty, \lambda_0]$ such that  $\ln(1 - s|r(s)|^2) = O(|s|^{-3})$ as $s \to -\infty$. The proof is now standard. 
\end{proof}

Lemma \ref{deltalemma} implies that
$$\delta(\zeta,\cdot)^{\sigma_3} \in I + (\dot{E}^2 \cap E^\infty)(\C \setminus (-\infty, \lambda_0]), \qquad \zeta \in \mathcal{I}.$$
Hence $m$ satisfies the RH problem (\ref{RHm}) if and only if $m^{(2)}$ 
satisfies the RH problem (\ref{RHmj}) with $j = 2$, where $\Gamma^{(2)} = \Gamma^{(1)}$ and the jump matrix $v^{(2)}$ is given by $v^{(2)} =  \delta_-^{\sigma_3} v^{(1)}  \delta_+^{-\sigma_3}$, that is,
\begin{align*}\nonumber
&v_1^{(2)} =  \begin{pmatrix} 1-\lambda r_1 r_1^* & \delta^2 r_1^* e^{-t\Phi} \\ - \delta^{-2} \lambda r_1 e^{t\Phi}& 1 \end{pmatrix}, 
\qquad
v_2^{(2)} = \begin{pmatrix} 1 & 0 \\ \delta^{-2} \lambda h e^{t\Phi} & 1 \end{pmatrix}, 
	\\ 
& v_3^{(2)} = \begin{pmatrix} 1- \lambda rr^* & -\delta_-^2 \frac{r^*}{1-\lambda rr^*} e^{-t\Phi} \\  \delta_+^{-2} \lambda \frac{r}{1-\lambda rr^*} e^{t\Phi}& 1 \end{pmatrix}, 
\qquad
v_4^{(2)} = \begin{pmatrix} 1 & - \delta^2 h^* e^{-t\Phi} \\ 0 & 1 \end{pmatrix}, 
\end{align*}
If we define $r_2(\lambda)$ by 
\begin{align}\label{r2def}
  r_2(\lambda) = \frac{r^*(\lambda)}{1-\lambda r(\lambda)r^*(\lambda)}, 
\end{align}
we can write the jumps across the real axis as 
\begin{align}\nonumber
&v_1^{(2)} =  \begin{pmatrix} 1 & \delta^2 r_1^* e^{-t\Phi} \\ 0 & 1 \end{pmatrix}
\begin{pmatrix} 1 & 0 \\ - \delta^{-2} \lambda r_1 e^{t\Phi}& 1 \end{pmatrix}, 
	\\
& v_3^{(2)} = \begin{pmatrix} 1 & -\delta_-^2 r_2 e^{-t\Phi} \\ 0 & 1 \end{pmatrix}
 \begin{pmatrix} 1 & 0 \\   \delta_+^{-2} \lambda r_2^* e^{t\Phi}& 1 \end{pmatrix}.
\end{align}

\subsection{Third transformation}
The purpose of the third transformation is to deform the contour so that the jump matrix involves the exponential factor $e^{-t\Phi}$ on the parts of the contour where $\re \Phi$ is positive and the factor $e^{t\Phi}$ on the parts where $\re \Phi$ is negative. Since the spectral functions have limited domains of analyticity, we follow the idea of \cite{DZ1993} and decompose each of the functions $h, r_1, r_2$ into an analytic part and a small remainder. The analytic part of the jump matrix will be deformed, whereas the small remainder will be left on the original contour.

\begin{lemma}[Analytic approximation of $h$]\label{decompositionlemma}
There exists a decomposition
\begin{align*}
h(\lambda) = h_{a}(t, \lambda) + h_{r}(t, \lambda), \qquad t > 0, \ \lambda \in i \R_+,  
\end{align*}
where the functions $h_{a}$ and $h_{r}$ have the following properties:
\begin{enumerate}[$(a)$]
\item For each $t > 0$, $h_{a}(t, \lambda)$ is defined and continuous for $\lambda \in \bar{{\mathcal D}}_1$ and analytic for $\lambda \in \mathcal{D}_1$.

\item The function $h_a$ satisfies
\begin{align}\label{haestimate}
\begin{cases} 
|h_{a}(t, \lambda) - h(0)| \leq C|\lambda| e^{\frac{t}{4} |\re \Phi(\zeta,\lambda )|},
	\\
|h_{a}(t, \lambda)| \leq 
\frac{C}{1 + |\lambda|^2} e^{\frac{t}{4} |\re \Phi(\zeta,\lambda )|},	
\end{cases}  \qquad \lambda \in \bar{\mathcal D}_1, \ \zeta \in \mathcal{I}, \ t > 0,
\end{align}
where the constant $C$ is independent of $\zeta, t, \lambda$.

\item The $L^1, L^2$, and $L^\infty$ norms of the function $h_{r}(t, \cdot)$ on $i \R_+$ are $O(t^{-3/2})$ as $t \to \infty$.
\end{enumerate}
\end{lemma}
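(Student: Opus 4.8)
The plan is to prove this by the standard Deift--Zhou analytic-approximation device, adapted to the present setting. The essential point is that $h$ is analytic only in $\mathcal{D}_2$, whereas the third transformation requires an approximant $h_a$ that is analytic on the \emph{opposite} side, in $\mathcal{D}_1$; hence one cannot simply continue $h$ and must genuinely approximate it. Moreover the split must be $t$-dependent: since $h_r = h - h_a$ lives on the fixed contour $i\R_+$, a $t$-independent decomposition with $\|h_r\| = O(t^{-3/2})$ would force $h_r \equiv 0$. Two competing demands must be reconciled. First, by (\ref{hexpansion}) the function $h$ decays only like $\lambda^{-2}$ at infinity, and $h_a$ must reproduce this decay, so a naive mollifier/Fourier construction (which would alter the decay) is inadmissible. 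Second, $h_a$ must be continuable into $\mathcal{D}_1$ with growth no worse than $e^{\frac t4|\re\Phi|}$, a rate tied to the phase itself.

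First I would peel off the behavior at infinity. Using the termwise-differentiable expansion (\ref{hexpansion}), I would choose a rational function $R(\lambda) = \sum_{j=2}^{K} c_j (\lambda + i)^{-j}$, with all poles at $\lambda = -i \notin \bar{\mathcal{D}}_1$ and coefficients $c_j$ fixed by a triangular linear system so that $R$ and $h$ have the same asymptotic expansion through order $\lambda^{-K}$. Then $R$ is analytic and $O(\lambda^{-2})$ on $\bar{\mathcal{D}}_1$, it goes directly into $h_a$ (the bounds (\ref{haestimate}) hold for it trivially, since $e^{\frac t4|\re\Phi|} \ge 1$), and $g := h - R$ is smooth on $i\R_+$, vanishes like $\lambda^{-K-1}$ at infinity, and has all derivatives decaying rapidly.

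The heart of the argument is the $t$-dependent analytic splitting of $g$. On $i\R_+$ the phase satisfies $\re\Phi(\zeta, is) = -2\zeta s$, so the only point where $e^{t\Phi}$ fails to decay is the endpoint $\lambda = 0$; accordingly I would subtract the Taylor jet of $g$ at the origin up to a finite order $p$, so that the corrected piece vanishes to high order there, and then continue it into $\mathcal{D}_1$ by representing it through a (rotated) Fourier-type integral and splitting the transform variable at a $t$-dependent scale of order $t^{1/2}$ dictated by the quadratic phase $\Phi \sim 4i\lambda^2$. This produces $g_a(t,\cdot)$ analytic in $\mathcal{D}_1$ and continuous on $\bar{\mathcal{D}}_1$, with $g_a(t,0) = g(0)$, with growth into $\mathcal{D}_1$ controlled by $e^{\frac t4|\re\Phi|}$, and with remainder $g_r = g - g_a$ whose $L^1, L^2, L^\infty$ norms on $i\R_+$ are $O(t^{-3/2})$ (the rate coming from an integration-by-parts/stationary-phase estimate built on the rapid decay of the derivatives of $g$ and the scale $t^{-1/2}$). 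Setting $h_a = R + g_a$ and $h_r = g_r$, property (a) and the remainder estimate (c) are immediate from the construction; the first inequality in (b) follows from $h_a(t,0) = R(0) + g(0) = h(0)$ together with the $e^{\frac t4|\re\Phi|}$-controlled Lipschitz bound near the origin, and the second from combining the $\lambda^{-2}$ decay of $R$ with the rapid decay of $g_a$.

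The main obstacle is precisely this middle construction: producing a single family $g_a(t,\cdot)$ that is analytic in $\mathcal{D}_1$, obeys the sharp growth bound $e^{\frac t4|\re\Phi|}$, and has an $O(t^{-3/2})$ remainder, all \emph{uniformly} for $\zeta \in \mathcal{I}$. The uniformity is delicate as $\zeta \to 0^+$, since then $\re\Phi \to 0$ on $i\R_+$ and the exponential factor offers no slack, forcing the bare bound $|h_a| \le C/(1+|\lambda|^2)$; this is why the growth of the continuation must be measured against $\re\Phi$ itself rather than against $|\lambda|^2$, and why the Taylor subtraction at the origin (which removes the near-origin growth that would otherwise violate the bound for small $\lambda$ and small $\zeta$) is indispensable. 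The Schwartz hypothesis on the data, which via the analysis of the Volterra equations yields the smoothness of $h$ and the termwise-differentiable expansion (\ref{hexpansion}), is exactly what makes all these estimates go through.
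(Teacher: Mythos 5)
Your overall architecture --- peel off a rational function with poles at $\lambda=-i$, then split the smooth remainder via a Fourier-type representation adapted to the quadratic phase --- is the same as the paper's (Appendix \ref{appA}), and your diagnosis of why the decomposition must be $t$-dependent and uniform as $\zeta\to 0$ is correct. But there is a concrete flaw in how you treat the origin, and it breaks the required bounds. You fix $R$ by matching $h$ \emph{only at infinity}, and then subtract the Taylor jet $P$ of $g = h - R$ at $\lambda = 0$. That polynomial $P$ must end up somewhere in the final decomposition $h = h_a + h_r$, and neither place is admissible. Your own claims ($g_a(t,0) = g(0)$, and property (b) deduced from ``the rapid decay of $g_a$'') show you intend $P$ to sit inside $h_a = R + g_a$; but $g(0) = h(0) - R(0)$ is in general nonzero, since nothing in your triangular system for $R$ controls its value at $0$. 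Then $h_a$ tends to the nonzero constant $g(0)$ (or grows polynomially if the jet order is $\geq 1$) as $\lambda \to \infty$ in $\bar{\mathcal{D}}_1$. On the ray $\R_+ \subset \partial \mathcal{D}_1$ one has $\re \Phi \equiv 0$, so the second inequality of (\ref{haestimate}) demands $|h_a| \leq C/(1+|\lambda|^2)$ there --- a contradiction. Putting $P$ into $h_r$ instead destroys property (c), since $h_r$ would then not even decay on $i\R_+$. So ``rapid decay of $g_a$'' and ``$g_a(t,0)=g(0)$'' are irreconcilable as you have set things up.

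The missing idea is that the interpolation at $0$ and at $\infty$ must be done \emph{simultaneously} by a single decaying rational function: the paper takes $f_0(\lambda) = \sum_{j=2}^{9} a_j (\lambda+i)^{-j}$ and fixes the eight coefficients by requiring $f_0$ to match the fifth-order Taylor jet of $h$ at $0$ \emph{and} its expansion through order $\lambda^{-3}$ at infinity (eight linearly independent conditions), so that the homeless polynomial never appears and $f = h - f_0$ vanishes to order $6$ at $0$ and order $4$ at infinity. The vanishing at the origin is precisely what makes the next step work: under the substitution $\phi = 4\lambda^2$ --- and the transform must be taken in this variable, not in $\lambda$, to obtain bounds uniform in $\zeta \in \mathcal{I}$, since the linear part of $\Phi$ degenerates as $\zeta \to 0$; your ``rotated'' transform leaves this unspecified --- the function $F(\phi) = \frac{(\lambda+i)^3}{\lambda} f(\lambda)$, extended by zero to $\phi > 0$, lies in $H^2(\R)$, and splitting its Fourier representation at $s = -t/4$ yields $f_a$, $f_r$ with exactly the bounds (b), (c). Note also that the split occurs at scale $t$ in the variable conjugate to $4\lambda^2$, not at your scale $t^{1/2}$: with only $H^2$ regularity, a cut at $|s| \sim t^{1/2}$ gives a remainder of order $t^{-3/4}$ by Cauchy--Schwarz, not the needed $t^{-3/2}$.
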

\begin{proof}
See Appendix \ref{appA}.
\end{proof}

Let $V_j := V_j(\zeta)$, $j = 1, \dots, 6$, denote the open subsets of $\C$ displayed in Figure \ref{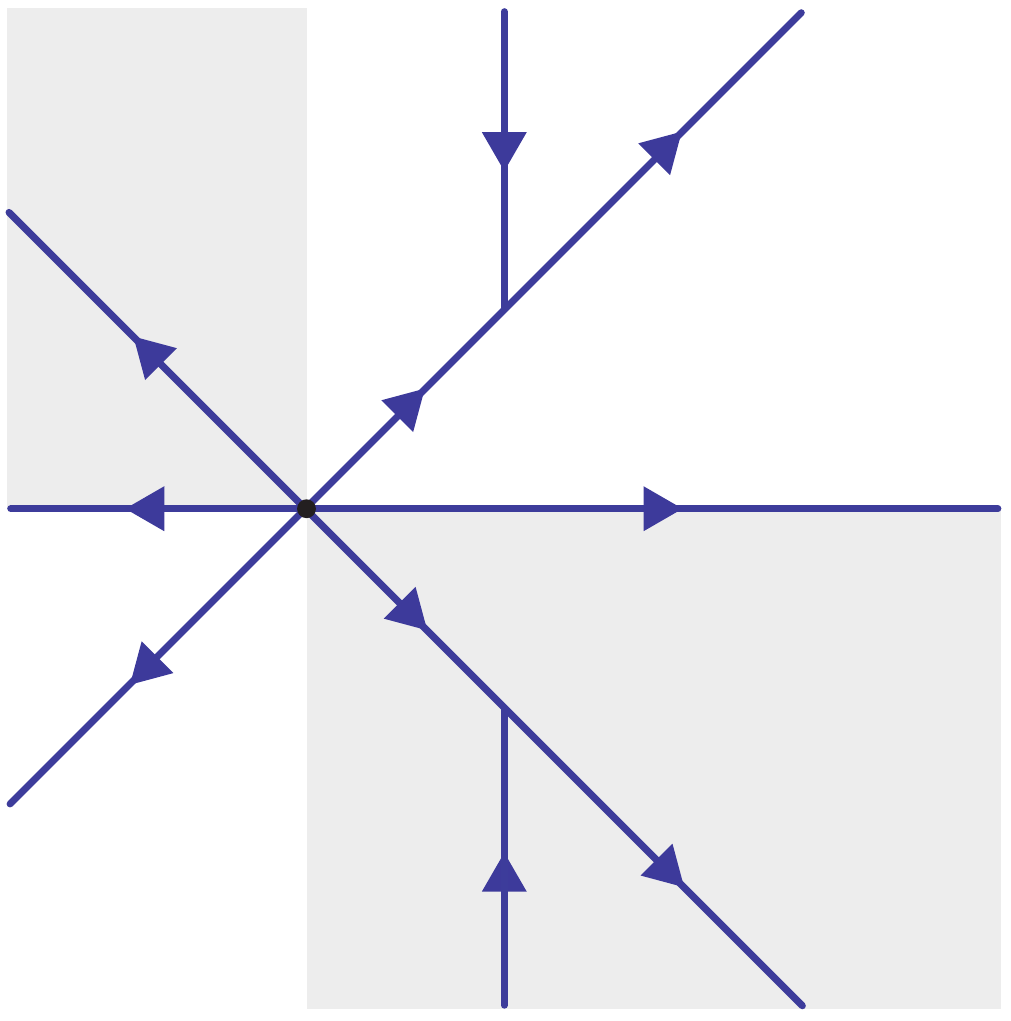}.

\begin{lemma}[Analytic approximation of $r_1$ and $r_2$]\label{decompositionlemma2}
There exist decompositions
\begin{align*}
& r_1(\lambda) = r_{1,a}(x, t, \lambda) + r_{1,r}(x, t, \lambda), \qquad \lambda > \lambda_0, 
	\\
& r_2(\lambda) = r_{2,a}(x, t, \lambda) + r_{2,r}(x, t, \lambda), \qquad \lambda < \lambda_0, 
\end{align*}
where the functions $\{r_{j,a}, r_{j,r}\}_{j=1}^2$ have the following properties:
\begin{enumerate}[$(a)$]
\item For each $\zeta \in \mathcal{I}$ and each $t > 0$, $r_{j,a}(x, t, \lambda)$ is defined and continuous for $\lambda \in \bar{V}_j$ and analytic for $\lambda \in V_j$, $j = 1,2$.

\item The functions $r_{1,a}$ and $r_{2,a}$ satisfy
\begin{align}\label{rjaestimates}
\begin{cases} 
|r_{j, a}(x, t, \lambda) - r_j(\lambda_0)| \leq C |\lambda - \lambda_0| e^{\frac{t}{4}|\re \Phi(\zeta,\lambda)|}, 
	\\  
|r_{j, a}(x, t, \lambda)| \leq \frac{C}{1 + |\lambda|^2} e^{\frac{t}{4}|\re \Phi(\zeta,\lambda)|},
\end{cases}   \lambda \in \bar{V}_j, \ \zeta \in \mathcal{I}, \ t > 0, \ j = 1, 2, 
\end{align}
where the constant $C$ is independent of $\zeta, t, \lambda$.

\item The $L^1, L^2$, and $L^\infty$ norms on $(\lambda_0, \infty)$ of $r_{1,r}(x, t, \cdot)$ are $O(t^{-3/2})$ as $t \to \infty$ uniformly with respect to $\zeta \in \mathcal{I}$.

\item The $L^1, L^2$, and $L^\infty$ norms on $(-\infty, \lambda_0)$ of $r_{2,r}(x, t, \cdot)$ are $O(t^{-3/2})$ as $t \to \infty$ uniformly with respect to $\zeta \in \mathcal{I}$.
\end{enumerate}
\end{lemma}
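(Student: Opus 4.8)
The plan is to carry out the analytic-approximation scheme of Deift and Zhou \cite{DZ1993}, in the form adapted to slowly decaying symbols as in \cite{L2016}. The functions $r_1$ on $(\lambda_0,\infty)$ and $r_2 = \frac{r^*}{1-\lambda r r^*}$ on $(-\infty,\lambda_0)$ share all the structural features the argument uses: both are $C^\infty$, both admit a termwise-differentiable asymptotic expansion in integer powers of $\lambda^{-1}$ beginning at order $\lambda^{-2}$, and both are attached to the ray emanating from the single critical point $\lambda_0$. For $r_1$ the $O(\lambda^{-2})$ decay is not accidental: the subtraction of $\bar{b}_1/(k^2+i)$ in the definition (\ref{hrdef}) is designed precisely to cancel the $\lambda^{-1}$ term coming from $b(k)=b_1/k+O(k^{-3})$ and $a(k)=1+O(k^{-2})$ (the improved remainders following from the parity relations (\ref{abABsymm})), while for $r_2$ it follows since $\lambda r r^*=O(\lambda^{-3})$. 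I would therefore prove both decompositions by a single argument and present only $r_1$, the case of $r_2$ being obtained by reflecting the orientation of the ray through $\lambda_0$ and replacing $V_1$ by $V_2$.

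For $r_1$ I would first remove, in one stroke, both the local behavior at $\lambda_0$ and the behavior at infinity. Fix integers $n$ and $M$ (to be chosen large) and construct a rational function $f_0(\lambda)$, with all poles placed off $\bar V_1$, that interpolates the Taylor data of $r_1$ at $\lambda_0$ through order $n$ and matches the asymptotic expansion of $r_1$ at infinity through order $M$. Such an $f_0$ is analytic in $\bar V_1$ and decays like $\lambda^{-2}$; it forms the principal piece of $r_{1,a}$. The reduced function $\tilde q := r_1 - f_0$ is then $C^\infty$ on $[\lambda_0,\infty)$, vanishes to order $n+1$ at $\lambda_0$, and decays faster than $\lambda^{-M}$ at infinity. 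I would then apply the Deift--Zhou Fourier splitting to $\tilde q$: representing $\tilde q$ through the Fourier transform in the variable conjugate to the local phase $\Phi(\zeta,\lambda)=\Phi(\zeta,\lambda_0)+4i(\lambda-\lambda_0)^2$ and cutting the transform at a $t$-dependent frequency, the low-frequency part continues analytically into $V_1$ (where $\re\Phi\le 0$) with growth controlled by $e^{\frac t4|\re\Phi|}$, while the high-frequency part remains on $(\lambda_0,\infty)$. Setting $r_{1,a} := f_0 + (\text{analytic continuation of the low-frequency part})$ and $r_{1,r} := r_1 - r_{1,a}$ gives the decomposition, with $r_{1,a}$ analytic in $V_1$ and continuous on $\bar V_1$.

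The bounds (\ref{rjaestimates}) would then follow. Since $f_0$ interpolates $r_1$ at $\lambda_0$, one has $r_{1,a}(x,t,\lambda_0)=r_1(\lambda_0)$, and the Lipschitz estimate $|r_{1,a}-r_1(\lambda_0)|\le C|\lambda-\lambda_0|e^{\frac t4|\re\Phi|}$ follows from the smoothness of $f_0$ together with the $e^{\frac t4|\re\Phi|}$ growth of the continued low-frequency piece (note $\re\Phi(\lambda_0)=0$, so the exponential is harmless near $\lambda_0$). The decay estimate $|r_{1,a}|\le \frac{C}{1+|\lambda|^2}e^{\frac t4|\re\Phi|}$ comes from the $\lambda^{-2}$ decay of $f_0$ and from the fast decay of $\tilde q$, which survives the Fourier continuation once $M$ is taken large enough. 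For the remainder, the vanishing of $\tilde q$ to order $n+1$ at $\lambda_0$ and the $C^\infty$ smoothness inherited from the Schwartz hypothesis on $u_0,g_0,g_1$ allow repeated integration by parts in the high-frequency integral; with $n$ chosen large this yields $\|r_{1,r}(x,t,\cdot)\|_{L^1\cap L^2\cap L^\infty((\lambda_0,\infty))}=O(t^{-3/2})$, uniformly in $\zeta\in\mathcal{I}$. The same steps prove the statements for $r_2$.

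I expect the main obstacle to be precisely the slow, merely $\lambda^{-2}$, decay of $r_1$ and $r_2$ at infinity --- the point flagged as challenge $(b)$ in the introduction. Applied naively, the Deift--Zhou splitting produces an analytic part that fails to decay at infinity, which is incompatible with the second bound in (\ref{rjaestimates}) and would be fatal for the later contour deformations and for the passage to the variable $\lambda$. The role of the rational interpolant $f_0$, and of the requirement that the asymptotic expansions be differentiable termwise, is exactly to repair this; the delicate technical point is to verify that the analytic continuation of $\tilde q$ retains the $\lambda^{-2}$ decay and that all constants are uniform in $\zeta\in\mathcal{I}$, including as $\zeta\to 0$ (equivalently $\lambda_0\to 0$). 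Uniformity at this endpoint is ultimately guaranteed by Remark \ref{r0remark}: since $r(\lambda)$ vanishes to all orders at the origin, no constant entering $r_{1,a}$ or $r_{2,a}$ blows up as the critical point approaches $0$.
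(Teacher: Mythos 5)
Your proposal is correct and follows essentially the same route as the paper: the paper omits the proof, noting it is analogous to that of Lemma \ref{decompositionlemma} in Appendix \ref{appA}, which uses exactly your scheme --- a rational interpolant matching the Taylor data at the critical point and the (termwise differentiable) expansion at infinity, followed by the Deift--Zhou Fourier splitting in the variable conjugate to the phase, with the cut at frequency $-t/4$ producing the $e^{\frac{t}{4}|\re\Phi|}$-controlled analytic part and the $O(t^{-3/2})$ remainder. One minor quibble: the uniformity of the constants as $\zeta\to 0$ does not really need Remark \ref{r0remark}; it follows already from the smoothness of $r_1$, $r_2$ up to $\lambda=0$ and the compactness of the range $\lambda_0\in[-N/4,0]$, the all-orders vanishing of $r$ at the origin being needed elsewhere (for the interpretation of the leading coefficient), not for this decomposition.
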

\begin{proof}
The proof is similar to that of Lemma \ref{decompositionlemma} and will be omitted, see \cite{DZ1993, Lnonlinearsteepest, XF2012}.
\end{proof}

We introduce $m^{(3)}(x,t,\lambda)$ by
$$m^{(3)}(x,t,\lambda) = m^{(2)}(x,t,\lambda)H(x,t,\lambda),$$
where the sectionally analytic function $H$ is defined by
\begin{align}\label{Hdef}
H = \begin{cases} 
\begin{pmatrix} 1 & 0 \\ \delta^{-2} \lambda r_{1,a} e^{t\Phi}& 1 \end{pmatrix}, & \lambda \in V_1,
	\\
\begin{pmatrix} 1 & -\delta^2 r_{2,a} e^{-t\Phi} \\ 0 & 1 \end{pmatrix}, & \lambda \in V_2,
	\\
\begin{pmatrix} 1 & 0 \\  -\delta^{-2} \lambda r_{2,a}^* e^{t\Phi}& 1 \end{pmatrix}, & \lambda \in V_3,
	\\
\begin{pmatrix} 1 & \delta^2 r_{1,a}^* e^{-t\Phi} \\ 0 & 1 \end{pmatrix}, & \lambda \in V_4,
	\\
\begin{pmatrix} 1 & - \delta^2 h_a^* e^{-t\Phi} \\ 0 & 1 \end{pmatrix}, & \lambda \in V_5,
	\\
\begin{pmatrix} 1 & 0 \\ -\delta^{-2} \lambda h_a e^{t\Phi} & 1 \end{pmatrix}, & \lambda \in V_6,
	\\
I, & \text{elsewhere}.	
\end{cases}
\end{align}
\begin{figure}
\begin{center}
\begin{overpic}[width=.55\textwidth]{Gamma2.pdf}
      \put(101,49){\small $\Gamma^{(3)}$}
      \put(28,44.3){$\lambda_0$}
      \put(75,72){\small $\re \Phi < 0$}
      \put(75,27){\small $\re \Phi > 0$}
      \put(42,57){\small $1$}
      \put(18,64){\small $2$}
      \put(18,33){\small $3$}
      \put(42,40){\small $4$}
      \put(67,15){\small $5$}
      \put(67,82){\small $6$}
      \put(65,52){\small $7$}
      \put(45,83){\small $8$}
      \put(17,52){\small $9$}
      \put(43,14){\small $10$}
      \put(62,65){\small $V_1$}
      \put(7,58){\small $V_2$}
      \put(7,40){\small $V_3$}
      \put(62,34){\small $V_4$}
      \put(56,8){\small $V_5$}
      \put(56,89){\small $V_6$}
      \end{overpic}
     \begin{figuretext}\label{Gamma2.pdf}
       The contour $\Gamma^{(3)}$ and the open sets $\{V_j\}_1^6$ in the complex $\lambda$-plane. The region where $\re \Phi > 0$ is shaded. 
     \end{figuretext}
     \end{center}
\end{figure}
By Lemma \ref{deltalemma}, Lemma \ref{decompositionlemma}, and Lemma \ref{decompositionlemma2}, we have  
$$H(x,t,\cdot)^{\pm1} \in I + (\dot{E}^2 \cap E^\infty)(\C \setminus \Gamma^{(3)}),$$
where $\Gamma^{(3)} \subset \C$ denotes the contour displayed in Figure \ref{Gamma2.pdf}.
It follows that $m$ satisfies the RH problem (\ref{RHm}) if and only if $m^{(3)}$ 
satisfies the RH problem (\ref{RHmj}) with $j = 3$, where the jump matrix $v^{(3)}$ is given by 
\begin{align}\nonumber
&v_1^{(3)} = \begin{pmatrix} 1 & 0 \\ -\delta^{-2} \lambda (r_{1,a} + h) e^{t\Phi}& 1 \end{pmatrix}, 
&&
v_2^{(3)} = \begin{pmatrix} 1 & -\delta^2 r_{2,a} e^{-t\Phi} \\ 0 & 1 \end{pmatrix}, 
	\\\nonumber
&v_3^{(3)} = \begin{pmatrix} 1 & 0 \\  \delta^{-2} \lambda r_{2,a}^* e^{t\Phi}& 1 \end{pmatrix}, 
&&
v_4^{(3)} = \begin{pmatrix} 1 & \delta^2 (r_{1,a}^* + h^*) e^{-t\Phi} \\ 0 & 1 \end{pmatrix}, 
	\\\nonumber
&v_5^{(3)} = \begin{pmatrix} 1 & \delta^2 (r_{1,a}^* + h_a^*) e^{-t\Phi} \\ 0 & 1 \end{pmatrix}, 
&&
v_6^{(3)} = \begin{pmatrix} 1 & 0 \\ -\delta^{-2} \lambda (r_{1,a} + h_a) e^{t\Phi}& 1 \end{pmatrix}, 
	\\\nonumber
&v_7^{(3)} =  \begin{pmatrix} 1-\lambda r_{1,r} r_{1,r}^* & \delta^2 r_{1,r}^* e^{-t\Phi} \\ - \delta^{-2} \lambda r_{1,r} e^{t\Phi}& 1 \end{pmatrix}, 
&&
v_8^{(3)} = \begin{pmatrix} 1 & 0 \\ \delta^{-2} \lambda h_r e^{t\Phi} & 1 \end{pmatrix}, 
	\\ \nonumber
& v_9^{(3)} = \begin{pmatrix} 1 & -\delta_-^2 r_{2,r} e^{-t\Phi} \\ 0 & 1 \end{pmatrix}
\begin{pmatrix} 1 & 0\\  \delta_+^{-2} \lambda r_{2,r}^* e^{t\Phi}& 1 \end{pmatrix}, 
&&
v_{10}^{(3)} = \begin{pmatrix} 1 & - \delta^2 h_r^* e^{-t\Phi} \\ 0 & 1 \end{pmatrix}, 
\end{align}
with $v_j^{(3)}$ denoting the restriction of $v^{(3)}$ to the contour labeled by $j$ in Figure \ref{Gamma2.pdf}.

\section{Local model}\label{localsec}
The transformations of Section \ref{transsec} led to a RH problem for $m^{(3)}$ with the property that the matrix $v^{(3)} - I$ decays to zero as $t \to \infty$ everywhere except near $\lambda_0$. This means that we only have to consider a neighborhood of $\lambda_0$ when computing the long-time asymptotics of $m^{(3)}$. In this section, we find a local solution $m^{\lambda_0}$ which approximates $m^{(3)}$ near $\lambda_0$. The basic idea is that in the large $t$ limit, the RH problem for $m^{(3)}$ near $\lambda_0$ reduces to a RH problem on a cross $X$ which can be solved exactly in terms of parabolic cylinder functions \cite{I1981, DZ1993}.

\subsection{Exact solution on the cross}
Let $X = X_1 \cup \cdots \cup X_4 \subset \C$ be the cross defined by
\begin{align} \nonumber
&X_1 = \bigl\{se^{\frac{i\pi}{4}}\, \big| \, 0 \leq s < \infty\bigr\}, && 
X_2 = \bigl\{se^{\frac{3i\pi}{4}}\, \big| \, 0 \leq s < \infty\bigr\},  
	\\ \label{Xdef}
&X_3 = \bigl\{se^{-\frac{3i\pi}{4}}\, \big| \, 0 \leq s < \infty\bigr\}, && 
X_4 = \bigl\{se^{-\frac{i\pi}{4}}\, \big| \, 0 \leq s < \infty\bigr\},
\end{align}
and oriented away from the origin, see Figure \ref{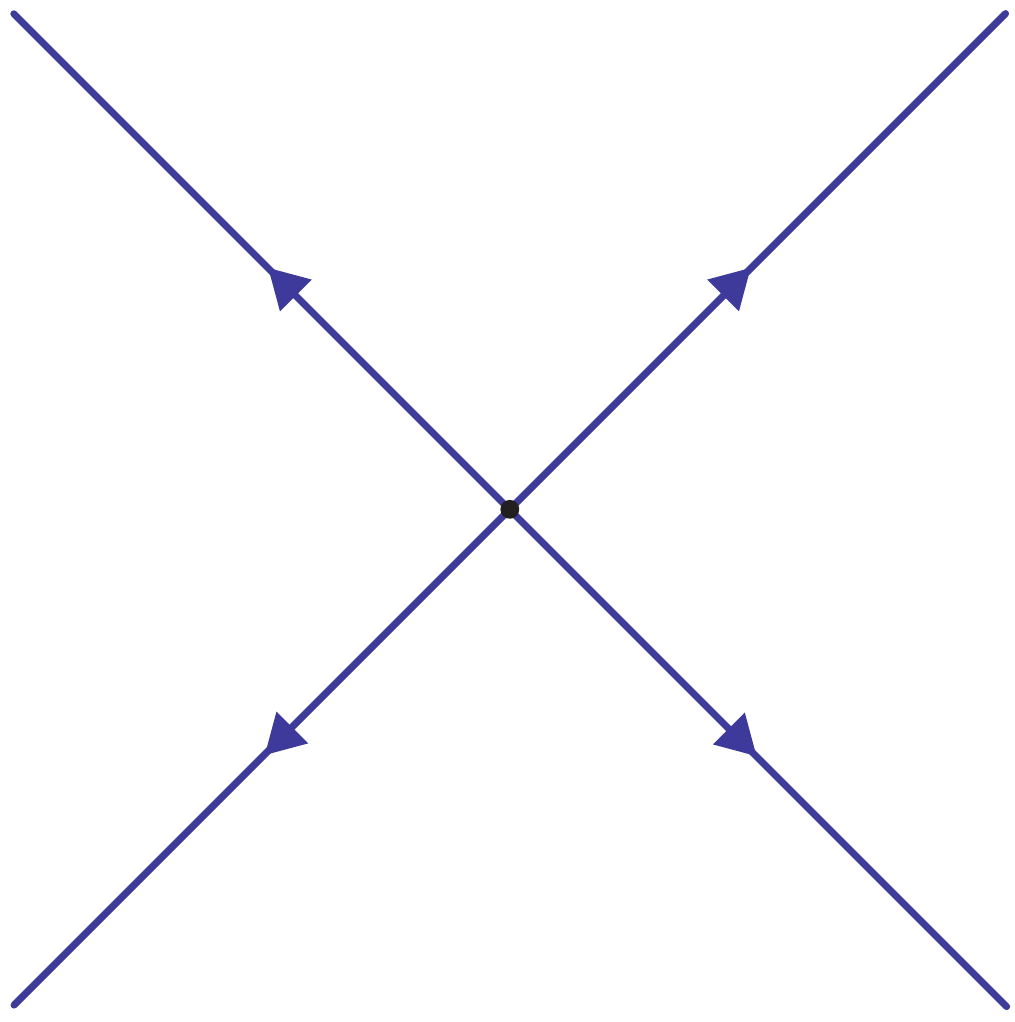}.

\begin{figure}
\begin{center}
 \begin{overpic}[width=.4\textwidth]{X.pdf}
      \put(73.5,68){\small $X_1$}
      \put(19,68){\small $X_2$}
      \put(17,27){\small $X_3$}
      \put(75,27){\small $X_4$}
      \put(48,43){$0$}
    \end{overpic}
     \begin{figuretext}\label{X.pdf}
        The contour $X = X_1 \cup X_2 \cup X_3 \cup X_4$.
     \end{figuretext}
     \end{center}
\end{figure}

\begin{lemma}[Exact solution on the cross]\label{Xlemma}
Define the function $\nu:\C \to (0,\infty)$ by 
$\nu(q) = \frac{1}{2\pi} \ln(1 + |q|^2)$ and define the jump matrix $v^X(q, z)$ for $z \in X$ by
\begin{align}\label{vXdef} 
v^X(q, z) = \begin{cases}
\begin{pmatrix} 1 & 0	\\
  q z^{2i\nu(q)} e^{\frac{iz^2}{2}}	& 1 \end{pmatrix}, &   z \in X_1, 
  	\\
\begin{pmatrix} 1 & -\frac{\bar{q}}{1 + |q|^2} z^{-2i\nu(q)}e^{-\frac{iz^2}{2}}	\\
0 & 1  \end{pmatrix}, &  z \in X_2, 
	\\
\begin{pmatrix} 1 &0 \\
- \frac{q}{1 + |q|^2}z^{2i\nu(q)} e^{\frac{iz^2}{2}}	& 1 \end{pmatrix}, &  z \in X_3,
	\\
 \begin{pmatrix} 1	& \bar{q} z^{-2i\nu(q)}e^{-\frac{iz^2}{2}}	\\
0	& 1 \end{pmatrix}, &  z \in X_4.
\end{cases}
\end{align}
Then, for each $q \in \C$, the RH problem 
\begin{align*}
\begin{cases} m^X(q, \cdot) \in I + \dot{E}^2(\C \setminus X), 
	\\
m_+^X(q, z) =  m_-^X(q, z) v^X(q, z) \quad \text{for a.e.} \ z \in X, 
\end{cases} 
\end{align*}
has a unique solution $m^X(q, z)$. This solution satisfies
\begin{align}\label{mXasymptotics}
  m^X(q, z) = I - \frac{i}{z}\begin{pmatrix} 0 & \beta^X(q) \\ \overline{\beta^X(q)} & 0 \end{pmatrix} + O\biggl(\frac{q}{z^2}\biggr), \qquad z \to \infty,  \ q \in \C, 
\end{align}  
where the error term is uniform with respect to $\arg z \in [0, 2\pi]$ and $q$ in compact subsets of $\C$, and the function $\beta^X(q)$ is defined by
\begin{align}\label{betacdef}
\beta^X(q) = \sqrt{\nu(q)} e^{i\left(\frac{\pi}{4} - \arg q - \arg \Gamma(i\nu(q)\right)}, \qquad q \in \C.
\end{align}
Moreover, for each compact subset $K$ of $\C$, 
$$\sup_{q \in K} \sup_{z \in \C \setminus X} |m^X(q, z)| < \infty$$
and
\begin{align}\label{mXqbound}
\sup_{q \in K} \sup_{z \in \C \setminus X} \frac{|m^X(q, z)- I|}{|q|} < \infty.
\end{align}
\end{lemma}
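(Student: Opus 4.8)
The plan is to treat this as the classical parabolic-cylinder model problem of Its \cite{I1981} and Deift--Zhou \cite{DZ1993}. I would dispose of \emph{uniqueness} first: since $\det v^X \equiv 1$ on $X$ and $m^X \to I$ at infinity, every solution satisfies $\det m^X \equiv 1$; if $m^X$ and $\tilde m^X$ both solve the problem, then $m^X(\tilde m^X)^{-1}$ has no jump across $X$, extends across the origin, lies in $I + \dot E^2(\C\setminus X)$, and tends to $I$, hence equals $I$ by the Liouville argument for the Smirnoff class. For the construction I would remove the $z$-dependence of the jump by setting $D(z) = e^{\frac{iz^2}{4}\sigma_3}\,z^{i\nu\sigma_3}$ (with the branch cut of $z^{i\nu}$ placed off the four rays) and $\Psi = m^X D^{-1}$. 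A direct computation shows that conjugation by $D$ deletes the factors $z^{\pm 2i\nu}e^{\pm\frac{iz^2}{2}}$ from (\ref{vXdef}), so the jump of $\Psi$ across each $X_j$ becomes a constant triangular matrix, while across the branch cut $\Psi$ acquires only a constant diagonal jump.

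Since all jumps of $\Psi$ are now independent of $z$, the logarithmic derivative $\Psi'\Psi^{-1}$ has no jump and is single-valued away from the origin; a short local analysis at the origin (where the four rays meet, and where $m^X$ carries a mild $z^{i\nu\sigma_3}$ behaviour that cancels the $\frac{i\nu}{z}\sigma_3$ coming from $D^{-1}$) shows the apparent pole at $0$ is removable, so $\Psi'\Psi^{-1}$ is entire. Feeding in $m^X = I + m_1/z + O(z^{-2})$ gives $\Psi'\Psi^{-1} = -\frac{iz}{2}\sigma_3 - \frac{i}{2}[m_1,\sigma_3] + O(z^{-1})$, whence by Liouville $\Psi' = \bigl(-\tfrac{iz}{2}\sigma_3 + B\bigr)\Psi$ with $B = -\tfrac{i}{2}[m_1,\sigma_3]$ a constant off-diagonal matrix. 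This motivates the ansatz: I would \emph{construct} $\Psi$ directly by solving this Weber system in each sector in terms of the parabolic cylinder functions $D_a(\zeta)$ with $\zeta$ a fixed rotation of $z$ and $a = i\nu$ dictated by the $z^{i\nu\sigma_3}$ growth, and then define $m^X := \Psi D$.

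The parameters are then pinned down by demanding that $m^X$ have exactly the jumps (\ref{vXdef}): matching the off-diagonal constants across the four rays forces $B_{12}B_{21} = -\nu(q)$, hence $|B_{12}|^2 = |B_{21}|^2 = \nu(q)$, and the connection formula relating $D_a(\zeta)$ at $\arg\zeta$ and $\arg\zeta+\pi$ (which carries a factor $\sqrt{2\pi}/\Gamma(-a)$) supplies the phase. Reading off the $1/z$ coefficient of $m^X = \Psi D$ from the large-$z$ asymptotics $D_a(\zeta)\sim \zeta^a e^{-\zeta^2/4}$ yields $(m^X)_{12}\sim -i\beta^X/z$ with $\beta^X(q)$ precisely as in (\ref{betacdef}) (the $\arg q$ term coming from the entries of $v^X$, the $\arg\Gamma(i\nu)$ term from the connection coefficient), establishing (\ref{mXasymptotics}).

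Finally, the uniform bounds follow from the explicit representation: $m^X(q,z)$ is jointly continuous in $(q,z)$ and equals $I$ to leading order as $z\to\infty$, so $\sup_{q\in K}\sup_{z}|m^X|<\infty$ by compactness. For (\ref{mXqbound}) the only delicate regime is $q\to0$, where $\nu = \frac{1}{2\pi}\ln(1+|q|^2) = O(|q|^2)$ gives $\|v^X - I\|_{L^1\cap L^2\cap L^\infty(X)} = O(|q|)$; the small-norm Beals--Coifman construction then yields $m^X - I = O(|q|)$ uniformly in $z$, while for $q$ bounded away from $0$ the quotient is controlled by compactness. \textbf{The main obstacle} is the sector-by-sector bookkeeping in the construction step: choosing the correct parabolic cylinder solution in each of the four sectors, tracking the branch of $z^{i\nu}$, and extracting the exact phase $\arg\Gamma(i\nu)$ from the connection formulas. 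All the genuine analytic content sits there; the uniqueness, the Liouville reduction, and the bounds are comparatively soft.
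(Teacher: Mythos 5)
Your proposal is correct and follows essentially the same route as the paper: the paper's proof likewise rests on the explicit parabolic cylinder (Its/Deift--Zhou) construction, and your small-norm argument for (\ref{mXqbound}) based on $\|v^X - I\|_{L^1\cap L^\infty(X)} = O(q)$ is precisely the alternative justification the paper itself offers. The only point you leave implicit is the factor $q$ in the $O(q/z^2)$ error term of (\ref{mXasymptotics}), but this follows from the same small-norm expansion (every coefficient in the large-$z$ expansion of the Cauchy-integral representation is $O(\|v^X-I\|) = O(q)$), or, as the paper notes, from $\beta^X(q) = O(q)$ in the explicit formula.
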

\begin{proof}
The proof relies on deriving an explicit formula for the solution $m^X$ in terms of parabolic cylinder functions \cite{I1981}. The lemma is standard except possibly for the presence of $q$ in the error term in (\ref{mXasymptotics}) and for the estimate (\ref{mXqbound}).  
The estimate (\ref{mXqbound}) can be derived by analyzing the behavior of the explicit formula for $m^X$ as $q \to 0$ (or by noting that the $L^1$ and $L^\infty$ norms of $v - I$ are $O(q)$ as $q \to 0$). 
Using that $\beta^X(q) = O(q)$ as $q \to 0$, the error term in (\ref{mXasymptotics}) also follows from the explicit formula.
\end{proof}

\begin{remark}\upshape
The estimate (\ref{mXqbound}), which shows that $m^X$ approaches the identity matrix as $q$ goes to zero, will be important for the subsequent analysis. Indeed, for the analysis of the derivative NLS equation, we will apply Lemma \ref{Xlemma} with $q = q(\zeta)$ such that $q \to 0$ as $\zeta = x/t \to 0$, see equation (\ref{qdef}). Therefore, in order to find the asymptotics of $u(x,t)$ near the boundary at $x = 0$, it is crucial to have good error estimates as $q \to 0$.
The inclusion of $q$ in the error term in (\ref{mXasymptotics}) is important for the same reason.
\end{remark}

\subsection{Local model near $\lambda_0$}
Fix a small $\epsilon > 0$ and let $D_\epsilon(\lambda_0)$ denote the open disk of radius $\epsilon$ centered at $\lambda_0$.
In order to relate $m^{(3)}$ to the solution $m^X$ of Lemma \ref{Xlemma}, we make a local change of variables for $\lambda$ near $\lambda_0$ and introduce the new variable $z := z(\zeta, \lambda)$ by 
$$z = \sqrt{8t}(\lambda -\lambda_0).$$
For each $\zeta \in \mathcal{I}$, the map $\lambda \mapsto z$ is a biholomorphism from $D_\epsilon(\lambda_0)$ to the open disk of radius $\sqrt{8t} \epsilon$ centered at the origin. 
We note that $z$ satisfies 
$$\frac{iz^2}{2} = 4it(\lambda - \lambda_0)^2 = t(\Phi(\zeta, \lambda) - \Phi(\zeta, \lambda_0)).$$ 

Integration by parts gives
\begin{align*}
 \int_{-\infty}^{\lambda_0}  & \frac{\ln(1 - s |r(s)|^2)}{s-\lambda} ds
= \ln (\lambda-s)\ln(1 - s |r(s)|^2)\big|_{s=-\infty}^{\lambda_0} 
	\\
& - \int_{-\infty}^{\lambda_0} \ln (\lambda-s) d\ln(1 - s |r(s)|^2), \qquad \zeta \in \mathcal{I}, \ \lambda \in \C \setminus (-\infty, \lambda_0].
\end{align*}

It follows that
\begin{align}\label{deltanuchi}
\delta(\zeta, \lambda) = e^{-i \nu \ln (\lambda - \lambda_0) + \chi(\zeta, \lambda)},
\end{align}
where  $\nu := \nu(\zeta) \geq 0$ is defined by
\begin{align}\label{nudef}
\nu(\zeta) = \frac{1}{2\pi} \ln(1 - \lambda_0 |r(\lambda_0)|^2), \qquad \zeta \in \mathcal{I},
\end{align}
and the function $\chi(\zeta, \lambda)$ is given by
\begin{align}\label{chidef}
\chi(\zeta, \lambda) = -\frac{1}{2\pi i} \int_{-\infty}^{\lambda_0} \ln (\lambda-s) d\ln(1 - s|r(s)|^2), \qquad \zeta \in \mathcal{I}, \ \lambda \in \C \setminus (-\infty, \lambda_0).
\end{align}
Hence we can write $\delta$ as
$$\delta(\zeta, \lambda)  = z^{-i\nu} \delta_0(\zeta, t) \delta_1(\zeta, \lambda), \qquad \zeta \in \mathcal{I}, \ \lambda \in \C \setminus (-\infty, \lambda_0),$$
where the functions $\delta_0(\zeta, t)$ and $\delta_1(\zeta, \lambda)$ are defined by
\begin{align*}
& \delta_0(\zeta, t) = (8t)^{\frac{i\nu}{2}} e^{\chi(\zeta, \lambda_0)}, \qquad t > 0,
	\\ 
&\delta_1(\zeta, \lambda) = e^{\chi(\zeta, \lambda) - \chi(\zeta, \lambda_0)}, \qquad \lambda \in D_\epsilon(\lambda_0).
\end{align*}

Define $\tilde{m}(x,t,z)$ by
$$\tilde{m}(x,t,z(\zeta, \lambda)) = m^{(3)}(x,t,\lambda)e^{-\frac{t\Phi(\zeta, \lambda_0)\sigma_3}{2}}\delta_0(\zeta,t)^{\sigma_3}|\lambda_0|^{-\frac{\sigma_3}{4}}, \qquad \lambda \in \C \setminus \Gamma^{(3)}.$$
Let $\mathcal{X}:= \mathcal{X}(\zeta) = \lambda_0 + X$ denote the cross $X$ centered at $\lambda_0$, see Figure \ref{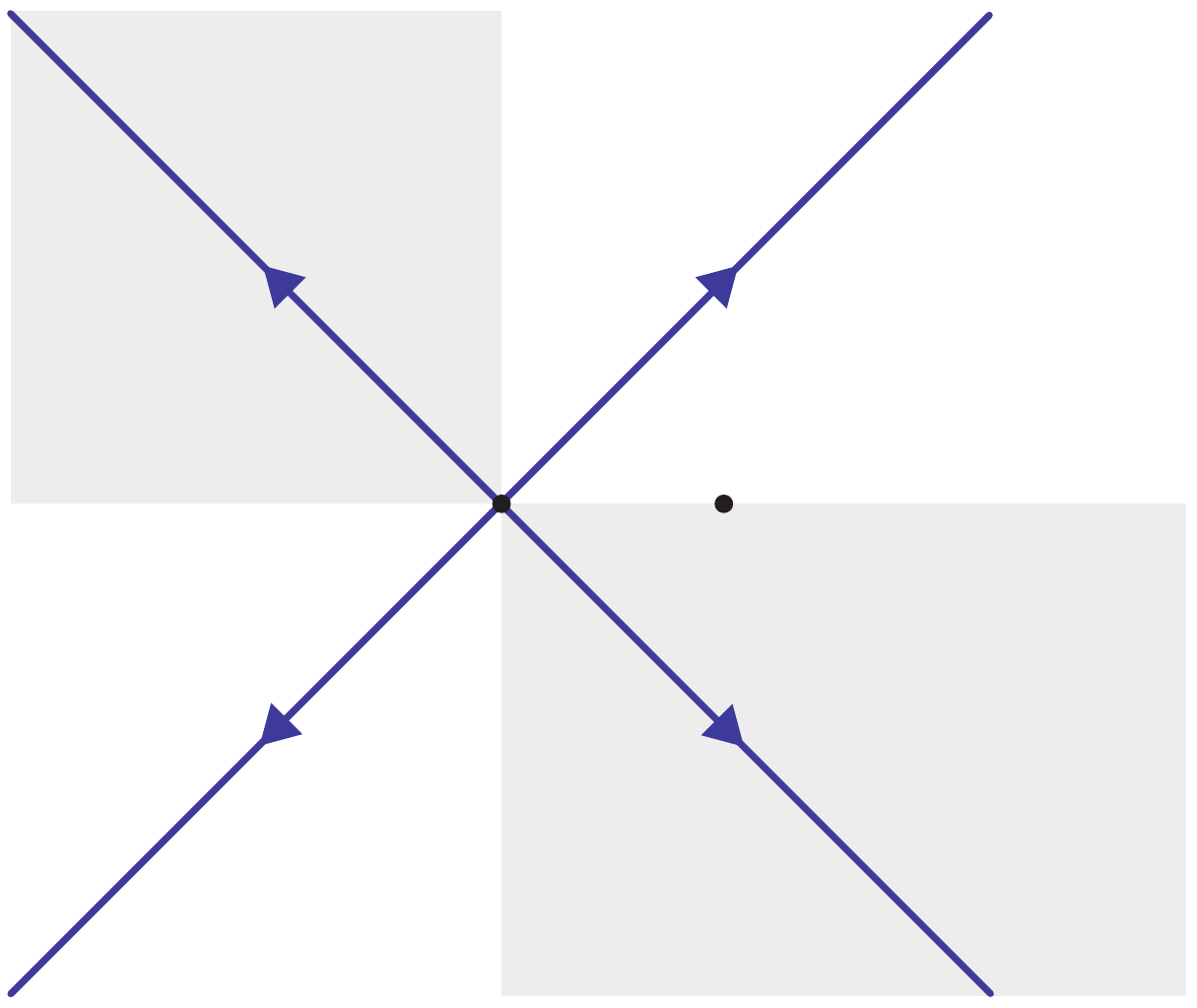}.
Then $\tilde{m}$ is a sectionally analytic function of $z$ which satisfies $\tilde{m}_+ = \tilde{m}_- \tilde{v}$ on $X$, where the jump matrix 
$$\tilde{v} = |\lambda_0|^{\frac{\hat{\sigma}_3}{4}} \delta_0(\zeta, t)^{-\hat{\sigma}_3}e^{\frac{t\Phi(\zeta, \lambda_0)}{2}\hat{\sigma}_3} v^{(3)}$$ 
is given for $z \in X$ by
\begin{align}\nonumber
&\tilde{v}(x,t,z) = \begin{cases}
\begin{pmatrix} 1 & 0 \\ - z^{2i\nu} \delta_1^{-2} \frac{\lambda}{|\lambda_0|^{1/2}} (r_{1,a} + h_a) e^{\frac{iz^2}{2}}& 1 \end{pmatrix}, & \lambda \in \mathcal{X}_1 \cap \mathcal{D}_1,
	\\
\begin{pmatrix} 1 & 0 \\ - z^{2i\nu} \delta_1^{-2} \frac{\lambda}{|\lambda_0|^{1/2}} (r_{1,a} + h) e^{\frac{iz^2}{2}}& 1 \end{pmatrix}, & \lambda \in \mathcal{X}_1 \cap \mathcal{D}_2,
	\\
\begin{pmatrix} 1 & -z^{-2i\nu} \delta_1^2 |\lambda_0|^{1/2}r_{2,a} e^{-\frac{iz^2}{2}} \\ 0 & 1 \end{pmatrix}, & \lambda \in \mathcal{X}_2,
	\\
\begin{pmatrix} 1 & 0 \\ z^{2i\nu} \delta_1^{-2} \frac{\lambda r_{2,a}^*}{|\lambda_0|^{1/2}} e^{\frac{iz^2}{2}}& 1 \end{pmatrix}, & \lambda \in \mathcal{X}_3,
	\\
\begin{pmatrix} 1 & z^{-2i\nu} \delta_1^2 |\lambda_0|^{1/2}(r_{1,a}^* + h_a^*) e^{-\frac{iz^2}{2}} \\ 0 & 1 \end{pmatrix}, & \lambda \in \mathcal{X}_4 \cap \mathcal{D}_4,
	\\
\begin{pmatrix} 1 & z^{-2i\nu} \delta_1^2 |\lambda_0|^{1/2}(r_{1,a}^* + h^*) e^{-\frac{iz^2}{2}} \\ 0 & 1 \end{pmatrix}, & \lambda \in \mathcal{X}_4 \cap \mathcal{D}_3.
\end{cases}
\end{align}

Define 
\begin{align}\label{qdef}
  q := q(\zeta) = |\lambda_0|^{1/2} r(\lambda_0), \qquad \zeta \in \mathcal{I}.
\end{align}
For a fixed $z$, $|\lambda|^{1/2}r(\lambda) \to q$ and $\delta_1(\lambda(\zeta, z)) \to 1$ as $t\to \infty$. This suggests that $\tilde{v}$ tends to the jump matrix $v^{X}$ defined in (\ref{vXdef}) for large $t$. 
In other words, that the jumps of $m^{(3)}$ for $\lambda$ near $\lambda_0$ approach those of the function $m^X \delta_0^{-\sigma_3} |\lambda_0|^{\frac{\sigma_3}{4}} e^{\frac{t\Phi(\zeta,\lambda_0)\sigma_3}{2}}$  as $t \to \infty$. 
This suggests that we approximate $m^{(3)}$ in the neighborhood $D_\epsilon(\lambda_0)$ of $\lambda_0$ by a $2 \times 2$-matrix valued function $m^{\lambda_0}$ of the form
\begin{align}\label{mmudef}
m^{\lambda_0}(x,t,\lambda) = Y(\zeta,t,\lambda) m^X(q(\zeta),z(\zeta, \lambda)) \delta_0(\zeta, t)^{-\sigma_3} |\lambda_0|^{\frac{\sigma_3}{4}} e^{\frac{t\Phi(\zeta, \lambda_0)\sigma_3}{2}},
\end{align}
where $Y(\zeta,t,\lambda)$ is a function which is analytic for $\lambda \in D_\epsilon(\lambda_0)$.
To ensure that $m^{\lambda_0}$ is a good approximation of $m^{(3)}$ for large $t$, we choose $Y$ so that $m^{\lambda_0} \to I$ on $\partial D_\epsilon(\lambda_0)$ as $t \to \infty$. 
Hence we choose
\begin{align}\label{Ymudef}
Y(\zeta,t,\lambda) \equiv
Y(\zeta,t) = e^{-\frac{t\Phi(\zeta, \lambda_0)\sigma_3}{2}}|\lambda_0|^{-\frac{\sigma_3}{4}} \delta_0(\zeta, t)^{\sigma_3}.
\end{align}

\begin{figure}
\begin{center}
  \begin{overpic}[width=.55\textwidth]{Xcal.pdf}
      \put(62,58){\small $\mathcal{X}_1$}
      \put(16.5,58){\small $\mathcal{X}_2$}
      \put(16,23){\small $\mathcal{X}_3$}
      \put(63,22){\small $\mathcal{X}_4$}
      \put(40,37){\small $\lambda_0$}
      \put(59.5,37){\small $0$}
    \end{overpic}
     \begin{figuretext}\label{Xcal.pdf}
        The contour $\mathcal{X} = \mathcal{X}_1 \cup \mathcal{X}_2 \cup \mathcal{X}_3 \cup \mathcal{X}_4$.
     \end{figuretext}
     \end{center}
\end{figure}
Let $\mathcal{X}^\epsilon := \mathcal{X}^\epsilon(\zeta)$ denote the part of $\mathcal{X}$ that lies in the disk $D_\epsilon(\lambda_0)$, i.e., $\mathcal{X}^\epsilon = \mathcal{X} \cap D_\epsilon(\lambda_0)$.
In the following $C$ denotes a generic constant independent of $\zeta, t, \lambda$, which may change within a computation.

\begin{lemma}\label{mulemma}
For each  $\zeta \in \mathcal{I}$ and $t > 0$, the function $m^{\lambda_0}(x,t,\lambda)$ defined in (\ref{mmudef}) is an analytic function of $\lambda \in D_\epsilon(\lambda_0) \setminus \mathcal{X}^\epsilon$. Moreover,
\begin{align}\label{mlambda0bound}
|m^{\lambda_0}(x,t,\lambda) - I| \leq C\frac{|q|}{|\lambda_0|^{1/2}} \leq C, \qquad \zeta \in\mathcal{I}, \ t > 2, \ \lambda \in \overline{D_\epsilon(\lambda_0)} \setminus \mathcal{X}^\epsilon.
\end{align}
Across $\mathcal{X}^\epsilon$, $m^{\lambda_0}$ obeys the jump condition $m_+^{\lambda_0} =  m_-^{\lambda_0} v^{\lambda_0}$, where the jump matrix $v^{\lambda_0}$ satisfies
\begin{align}\label{v5vmuestimate}
\begin{cases}
 \|v^{(3)} - v^{\lambda_0}\|_{L^1(\mathcal{X}^\epsilon)} \leq C t^{-1} \ln t, 
	\\
 \|v^{(3)} - v^{\lambda_0}\|_{L^2(\mathcal{X}^\epsilon)} \leq C t^{-3/4} \ln t, 
	\\
\|v^{(3)} - v^{\lambda_0}\|_{L^\infty(\mathcal{X}^\epsilon)} \leq  C t^{-1/2} \ln t,
\end{cases} \qquad \zeta \in \mathcal{I}, \ t >2.
\end{align}	
Furthermore, as $t \to \infty$,
\begin{align}\label{mmodmuestimate2}
\|m^{\lambda_0}(x,t,\cdot)^{-1} - I\|_{L^\infty(\partial D_\epsilon(\lambda_0))} = O\bigg(\frac{q}{|\lambda_0|^{1/2} \sqrt{t}}\bigg), 
\end{align}
and
\begin{align}\label{mmodmuestimate1}
\frac{1}{2\pi i}\int_{\partial D_\epsilon(\lambda_0)}(m^{\lambda_0}(x,t,\lambda)^{-1} - I) d\lambda
= -\frac{Y(\zeta,t) m_1^X(\zeta) Y(\zeta,t)^{-1}}{\sqrt{8t}} + O\bigg(\frac{q}{|\lambda_0|^{1/2} t}\bigg),
\end{align}	
uniformly with respect to $\zeta \in \mathcal{I}$, where $m_1^X(\zeta)$ is defined by
\begin{align}\label{m1Xdef}
m_1^X(\zeta) = -i\begin{pmatrix} 0 & \beta^X(q(\zeta)) \\ \overline{\beta^X(q(\zeta))} & 0 \end{pmatrix}.
\end{align}
\end{lemma}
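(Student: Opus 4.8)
The plan rests on the structural observation that, by (\ref{Ymudef}), the prefactor $Y=Y(\zeta,t)$ is a \emph{diagonal} matrix independent of $\lambda$, so that (\ref{mmudef}) can be rewritten as $m^{\lambda_0}=Y\,m^X\big(q,z(\zeta,\lambda)\big)\,Y^{-1}$ (the trailing factors in (\ref{mmudef}) are precisely $Y^{-1}$). Writing $Y=\diag(d,d^{-1})$, conjugation by $Y$ fixes the diagonal entries of a matrix and multiplies its $(1,2)$ and $(2,1)$ entries by $d^2$ and $d^{-2}$. Since $\Phi(\zeta,\lambda_0)=i(4\lambda_0^2+2\zeta\lambda_0)$ is purely imaginary and the integral in (\ref{chidef}) is real when $\lambda=\lambda_0$ (so $\chi(\zeta,\lambda_0)$ is purely imaginary and $|\delta_0|=1$), one finds $|d|=|\lambda_0|^{-1/4}$, hence $|d^{\pm2}|=|\lambda_0|^{\mp1/2}$; as $\zeta\in\mathcal{I}=(0,N]$ forces $|\lambda_0|=\zeta/4\in(0,N/4]$, these factors are bounded by $C|\lambda_0|^{-1/2}$. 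Analyticity of $m^{\lambda_0}$ on $D_\epsilon(\lambda_0)\setminus\mathcal{X}^\epsilon$ is then immediate from Lemma \ref{Xlemma}, the biholomorphy of $\lambda\mapsto z$ carrying $\mathcal{X}^\epsilon$ onto the part of $X$ in the disk, and the $\lambda$-independence of $Y$. The bound (\ref{mlambda0bound}) follows at once: from $m^{\lambda_0}-I=Y(m^X-I)Y^{-1}$, the estimate $|m^X-I|\le C|q|$ in (\ref{mXqbound}), and the off-diagonal rescaling by at most $|\lambda_0|^{-1/2}$, I get $|m^{\lambda_0}-I|\le C|q|/|\lambda_0|^{1/2}$, which equals $C|r(\lambda_0)|$ by (\ref{qdef}) and is therefore also $\le C$ since $r$ is bounded on a bounded $\lambda_0$-range.

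The core of the lemma is the jump estimate (\ref{v5vmuestimate}). Using $m^{(3)}=\tilde m\,Y^{-1}$ (so the jump of $m^{(3)}$ equals $Y\tilde vY^{-1}$, with $\tilde v$ the conjugated jump matrix computed just above the lemma) and $v^{\lambda_0}=Yv^XY^{-1}$, one has $v^{(3)}-v^{\lambda_0}=Y(\tilde v-v^X)Y^{-1}$, reducing everything to estimating $\tilde v-v^X$ on each arm of $\mathcal{X}^\epsilon$. The decisive algebraic facts are $1+|q|^2=1-\lambda_0|r(\lambda_0)|^2$, so that $\nu(q)=\nu(\zeta)$ and the powers $z^{\pm2i\nu}$ in $\tilde v$ and $v^X$ coincide exactly, together with $\tfrac{\lambda_0}{|\lambda_0|^{1/2}}r(\lambda_0)=-q$ and $r_{1,a}(\lambda_0)+h(\lambda_0)=r(\lambda_0)$, which show that $\tilde v$ and $v^X$ agree in the joint limit $\lambda\to\lambda_0$, $t\to\infty$. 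On each arm the difference then splits into terms $z^{\pm2i\nu}(\text{coefficient difference})\,e^{\pm iz^2/2}$, whose coefficient difference is controlled by $|\delta_1^{\mp2}-1|\le C|\lambda-\lambda_0|\big(1+\big|\ln|\lambda-\lambda_0|\big|\big)$ (from the logarithmic structure of $\chi$ in (\ref{chidef})), by $|r_{j,a}-r_j(\lambda_0)|\le C|\lambda-\lambda_0|$ and $|h-h(\lambda_0)|\le C|\lambda-\lambda_0|$ (from (\ref{rjaestimates}) and the smoothness of $h$), and by the factor $|\lambda-\lambda_0|$ coming from $\tfrac{\lambda-\lambda_0}{|\lambda_0|^{1/2}}$; on the arms entering $\mathcal{D}_1$ one uses $h_a$ in place of $h$, the growth $e^{\frac t4|\re\Phi|}$ permitted by (\ref{haestimate}) being dominated by the decay $|e^{\pm iz^2/2}|=e^{-|z|^2/2}$ valid on $X$ (where $z^2\in i\R$). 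Inserting $|\lambda-\lambda_0|=|z|/\sqrt{8t}$ gives the pointwise bound $|\tilde v-v^X|\le Ct^{-1/2}(\ln t)\,|z|e^{-|z|^2/2}$, hence the $L^\infty$ estimate; the substitution $|d\lambda|=|dz|/\sqrt{8t}$ and the convergent integrals $\int_0^\infty s^p e^{-s^2}\,ds$ then yield the $L^2$ and $L^1$ bounds $Ct^{-3/4}\ln t$ and $Ct^{-1}\ln t$.

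For (\ref{mmodmuestimate2}) and (\ref{mmodmuestimate1}) I would restrict to $\partial D_\epsilon(\lambda_0)$, where $|z|=\sqrt{8t}\,\epsilon\to\infty$, and use the large-$z$ expansion (\ref{mXasymptotics}). Since $\sqrt{\nu(q)}=O(|q|)$ gives $\beta^X(q)=O(|q|)$, one has $m^X-I=\tfrac{m_1^X}{z}+O(q/z^2)$ with $m_1^X$ as in (\ref{m1Xdef}), and, because $\det m^X=1$, likewise $m^{X,-1}-I=-\tfrac{m_1^X}{z}+O(q/z^2)$. Conjugating by $Y$ (off-diagonal rescaling $|\lambda_0|^{\mp1/2}$) and using $1/|z|=O(1/\sqrt t)$ on $\partial D_\epsilon$ yields (\ref{mmodmuestimate2}). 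For (\ref{mmodmuestimate1}), I pull the $\lambda$-independent matrix $Y$ outside the integral, apply the residue theorem $\tfrac{1}{2\pi i}\oint_{\partial D_\epsilon}\tfrac{d\lambda}{\lambda-\lambda_0}=1$ to the term $-\tfrac{m_1^X}{\sqrt{8t}(\lambda-\lambda_0)}$ to obtain the leading contribution $-Y m_1^X Y^{-1}/\sqrt{8t}$, and bound the conjugated remainder, of size $O(q/z^2)=O(q/t)$ after integrating around the circle, by $O(q/(|\lambda_0|^{1/2}t))$ once the $|\lambda_0|^{-1/2}$ factor from the off-diagonal rescaling is included.

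I expect the main obstacle to be the factor-by-factor comparison in the second paragraph: the logarithmic growth of $\chi(\zeta,\lambda)-\chi(\zeta,\lambda_0)$ and of $z^{2i\nu}$ as $\lambda\to\lambda_0$ is exactly what costs the factor $\ln t$ in (\ref{v5vmuestimate}), while securing \emph{uniformity} in $\zeta\in\mathcal{I}$ down to the boundary $\zeta\to0$ (where $q\to0$ and the $|\lambda_0|^{-1/2}$ rescalings must be balanced against the smallness of $q$) is what makes the $q$-dependent estimates of Lemma \ref{Xlemma}, especially (\ref{mXqbound}) and the $O(q/z^2)$ error in (\ref{mXasymptotics}), indispensable.
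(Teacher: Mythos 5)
Your overall strategy is the paper's own: write $m^{\lambda_0}=Y m^X Y^{-1}$ with diagonal, $\lambda$-independent $Y$ whose only non-unimodular factor is $|\lambda_0|^{-\sigma_3/4}$ (since $\Phi(\zeta,\lambda_0)$ and $\chi(\zeta,\lambda_0)$ are purely imaginary), get (\ref{mlambda0bound}) from (\ref{mXqbound}), compare $\tilde v$ with $v^X$ arm by arm using $\nu(q)=\nu(\zeta)$, $q=|\lambda_0|^{1/2}r(\lambda_0)$, the log-Lipschitz bound on $\delta_1-1$ and the Gaussian decay on the cross, and finally use the expansion (\ref{mXasymptotics}) together with Cauchy's formula on $\partial D_\epsilon(\lambda_0)$. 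These are exactly the steps of the paper's proof, and your treatment of (\ref{mlambda0bound}), (\ref{mmodmuestimate2}) and (\ref{mmodmuestimate1}) is correct.

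There is, however, one step that fails as literally written, and it is precisely at the point where the lemma's uniformity in $\zeta\in\mathcal{I}$ is at stake. You claim the pointwise bound $|\tilde v-v^X|\le Ct^{-1/2}(\ln t)|z|e^{-|z|^2/2}$ with $C$ independent of $\zeta$, and then conjugate by $Y$ as if this were harmless. Neither half is true: the entries of $\tilde v-v^X$ carry entry-dependent powers of $|\lambda_0|$ — for instance on $\mathcal{X}_1\cap\mathcal{D}_1$ the $(21)$ entry equals $-|\lambda_0|^{-1/2}\bigl[\delta_1^{-2}\lambda(r_{1,a}+h_a)+|\lambda_0|^{1/2}q\bigr]z^{2i\nu}e^{iz^2/2}$, which degenerates like $|\lambda_0|^{-1/2}$ as $\zeta\to0$ — and, dually, conjugation by $Y$ multiplies $(12)$ entries by $|\lambda_0|^{-1/2}$. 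These two defects cancel exactly, entry by entry, and the correct (and $\zeta$-uniform) statement is the bound on the \emph{conjugated} difference $v^{(3)}-v^{\lambda_0}=Y(\tilde v-v^X)Y^{-1}$, equivalently on $|\lambda_0|^{-\hat{\sigma}_3/4}(\tilde v-v^X)$, whose entries are the bracketed combinations $\delta_1^{-2}\lambda(r_{1,a}+h_a)+|\lambda_0|^{1/2}q$, $\;\delta_1^{2}r_{2,a}-|\lambda_0|^{-1/2}\bar q/(1+|q|^2)=\delta_1^2 r_{2,a}-r_2(\lambda_0)$, etc.; this is how the paper organizes the estimates (\ref{rqestimate}) and (\ref{vtildevX}). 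So you must run your ``coefficient difference'' estimates on the conjugated entries from the start, not on $\tilde v-v^X$ itself. Two smaller points in the same spirit: (\ref{haestimate}) compares $h_a$ with $h(0)$, not with $h(\lambda_0)$, so on $\mathcal{X}_1\cap\mathcal{D}_1$ you additionally need $|h(0)-h(\lambda_0)|\le C|\lambda_0|$ together with the geometric fact that $|\lambda_0|\le C|\lambda-\lambda_0|$ there (that arm enters $\mathcal{D}_1$ only at distance $\ge\sqrt{2}\,|\lambda_0|$ from $\lambda_0$); and the net Gaussian factor is $e^{-3|z|^2/8}$ (the decay $e^{-|z|^2/2}$ eroded by the permitted growth $e^{\frac{t}{4}|\re\Phi|}=e^{|z|^2/8}$), not $e^{-|z|^2/2}$ — harmless for the $L^1$, $L^2$, $L^\infty$ conclusions, but it should be stated correctly.
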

\begin{proof}
The analyticity of $m^{\lambda_0}$ follows directly from the definition. Since
$$m^{\lambda_0}(x,t,\lambda) = Y(\zeta,t)m^X(q, z)Y(\zeta,t)^{-1} 
= e^{-\frac{t\Phi(\zeta, \lambda_0)\hat{\sigma}_3}{2}}|\lambda_0|^{-\frac{\hat{\sigma}_3}{4}} \delta_0(\zeta, t)^{\hat{\sigma}_3} m^X(q, z),$$
where $|e^{-\frac{t\Phi(\zeta, \lambda_0)}{2}}| = |\delta_0(\zeta, t)| = 1$, the estimate (\ref{mlambda0bound}) is a direct consequence of (\ref{mXqbound}).

We next establish (\ref{v5vmuestimate}). Standard estimates show that
$$|\chi(\zeta, \lambda) - \chi(\zeta, \lambda_0)| \leq  C |\lambda - \lambda_0| ( 1+ |\ln|\lambda-\lambda_0||), \qquad  \zeta \in \mathcal{I}, \ \lambda \in \mathcal{X}^\epsilon.$$
This yields
\begin{align}\nonumber
|\delta_1(\zeta, \lambda) - 1| & =  |e^{\chi(\zeta, \lambda) - \chi(\zeta, \lambda_0)} -1|
	\\ \label{delta1estimate}
& \leq  C |\lambda - \lambda_0| ( 1+ |\ln|\lambda-\lambda_0||), \qquad \zeta \in \mathcal{I}, \ \lambda \in \mathcal{X}^\epsilon.
\end{align}
On the other hand, equations (\ref{haestimate}) and (\ref{rjaestimates}) imply that the following estimates hold for all $\zeta \in \mathcal{I}$ and $t > 2$:
\begin{align}\nonumber
& \big|\lambda(r_{1,a}(x,t,\lambda) + h_a(t,\lambda)) + |\lambda_0|^{1/2}q\big| \leq C|\lambda - \lambda_0|e^{\frac{t}{4} |\re \Phi(\zeta,\lambda )|}, && \lambda \in \mathcal{X}_1^\epsilon \cap \mathcal{D}_1,
 	\\\nonumber
& \big|\lambda(r_{1,a}(x,t,\lambda) + h(\lambda)) + |\lambda_0|^{1/2}q\big| \leq C|\lambda - \lambda_0|e^{\frac{t}{4} |\re \Phi(\zeta,\lambda )|}, && \lambda \in \mathcal{X}_1^\epsilon \cap \mathcal{D}_2,
 	\\\nonumber
& \bigg|r_{2,a}(x,t,\lambda) - |\lambda_0|^{-1/2}\frac{\bar{q}}{1 + |q|^2}\bigg| \leq C|\lambda - \lambda_0|e^{\frac{t}{4} |\re \Phi(\zeta,\lambda )|},&& \lambda \in \mathcal{X}_2^\epsilon,
  	\\\nonumber
& \bigg| \lambda \overline{r_{2,a}(x,t,\bar{\lambda})} + |\lambda_0|^{1/2}\frac{q}{1 + |q|^2}\bigg| \leq C|\lambda - \lambda_0|e^{\frac{t}{4} |\re \Phi(\zeta,\lambda )|},&& \lambda \in \mathcal{X}_3^\epsilon,
 	\\ \nonumber
&  \big|\overline{r_{1,a}(x,t,\bar{\lambda})} + \overline{h_a(t,\bar{\lambda})} - |\lambda_0|^{-1/2}\bar{q}\big| \leq C|\lambda - \lambda_0|e^{\frac{t}{4} |\re \Phi(\zeta,\lambda )|}, 
&& \lambda \in \mathcal{X}_4^\epsilon \cap \mathcal{D}_4,
 	\\ \label{rqestimate}
&  \big| \overline{r_{1,a}(x,t,\bar{\lambda})} + \overline{h(\bar{\lambda})} - |\lambda_0|^{-1/2}\bar{q}\big| \leq C|\lambda - \lambda_0|e^{\frac{t}{4} |\re \Phi(\zeta,\lambda )|}, 
&& \lambda \in \mathcal{X}_4^\epsilon \cap \mathcal{D}_3.
\end{align}
Indeed, for $\lambda \in \mathcal{X}_1^\epsilon \cap \mathcal{D}_1$, the estimates (\ref{haestimate}) and (\ref{rjaestimates}) give
\begin{align*}
 \big|\lambda(&r_{1,a}(x,t,\lambda) + h_a(t,\lambda)) + |\lambda_0|^{1/2}q\big| 
= \big|\lambda(r_{1,a}(x,t,\lambda) + h_a(t,\lambda)) - \lambda_0 (r_1(\lambda_0) + h(\lambda_0))\big| 
	\\
 \leq & \; |\lambda - \lambda_0||r_{1,a}(x,t,\lambda) + h_a(t,\lambda)| + |\lambda_0| |r_{1,a}(x,t,\lambda)- r_1(\lambda_0)| + |\lambda_0||h_a(t,\lambda) - h(0)| 
	\\
& + |\lambda_0||h(0) - h(\lambda_0)|
	\\
\leq &\; C |\lambda - \lambda_0|\frac{e^{\frac{t}{4} |\re \Phi(\zeta,\lambda )|}}{1 + |\lambda|^2}
+ C|\lambda_0| |\lambda - \lambda_0|e^{\frac{t}{4} |\re \Phi(\zeta,\lambda )|}
 + C|\lambda_0||\lambda|e^{\frac{t}{4} |\re \Phi(\zeta,\lambda )|} + C|\lambda_0|^2
	\\	 
\leq &\; C|\lambda - \lambda_0|e^{\frac{t}{4} |\re \Phi(\zeta,\lambda )|},
\end{align*}
which proves the first estimate in (\ref{rqestimate}); the proofs of the other estimates are similar.

Since
\begin{align*}|\lambda_0|^{-\frac{\hat{\sigma}_3}{4}}(\tilde{v} - v^X)
= 
 \begin{cases}
\begin{pmatrix} 0 & 0 \\ -(\delta_1^{-2} \lambda(r_{1,a} + h_a) + |\lambda_0|^{1/2} q )z^{2i\nu}e^{\frac{iz^2}{2}} & 0 \end{pmatrix}, & \lambda \in \mathcal{X}_1 \cap \mathcal{D}_1,
	\\
\begin{pmatrix} 0 & 0 \\ -(\delta_1^{-2} \lambda(r_{1,a} + h) + |\lambda_0|^{1/2}q )z^{2i\nu}e^{\frac{iz^2}{2}} & 0 \end{pmatrix}, & \lambda \in \mathcal{X}_1 \cap \mathcal{D}_2,
	\\
\begin{pmatrix} 0 & (-\delta_1^2 r_{2,a} + |\lambda_0|^{-1/2}\frac{\bar{q}}{1 + |q|^2})z^{-2i\nu} e^{-\frac{iz^2}{2}} \\ 0 & 0 \end{pmatrix}, & \lambda \in \mathcal{X}_2,
	\\
\begin{pmatrix} 0 & 0 \\ (\delta_1^{-2} \lambda r_{2,a}^* + |\lambda_0|^{1/2}\frac{q}{1 + |q|^2})z^{2i\nu} e^{\frac{iz^2}{2}} & 0 \end{pmatrix}, & \lambda \in \mathcal{X}_3,
	\\
\begin{pmatrix} 0 & (\delta_1^2 (r_{1,a}^* + h_a^*) - |\lambda_0|^{-1/2}\bar{q}) z^{-2i\nu}e^{-\frac{iz^2}{2}}\\ 0 & 0 \end{pmatrix}, & \lambda \in \mathcal{X}_4 \cap \mathcal{D}_4,
	\\
\begin{pmatrix} 0 & (\delta_1^2 (r_{1,a}^* + h^*) - |\lambda_0|^{-1/2}\bar{q}) z^{-2i\nu}e^{-\frac{iz^2}{2}}\\ 0 & 0 \end{pmatrix}, & \lambda \in \mathcal{X}_4 \cap \mathcal{D}_3.
\end{cases}
\end{align*}
equations (\ref{delta1estimate}) and (\ref{rqestimate}) imply 
\begin{align}\label{vtildevX}
||\lambda_0|^{-\frac{\hat{\sigma}_3}{4}}(\tilde{v} - v^X)| \leq C|\lambda - \lambda_0| ( 1+ |\ln|\lambda-\lambda_0||)e^{-3t|\lambda - \lambda_0|^2}, \qquad \lambda \in \mathcal{X}^\epsilon.
\end{align}
Indeed, for $\lambda \in  \mathcal{X}_1 \cap \mathcal{D}_1$, we have
$$\re\Big(\frac{iz^2}{2}\Big) = t \re \Phi(\zeta, \lambda) = -4t|\lambda - \lambda_0|^2,$$ 
and hence
\begin{align*}
 ||\lambda_0|^{-\frac{\hat{\sigma}_3}{4}}(\tilde{v} - v^X)|
\leq &\; |\delta_1^{-2} - 1| \big|\lambda(r_{1,a} + h_a)z^{2i\nu}e^{\frac{iz^2}{2}}\big|
+ \big|(\lambda(r_{1,a} + h_a) + |\lambda_0|^{1/2} q )z^{2i\nu}e^{\frac{iz^2}{2}}\big|
	\\
\leq&\; C |\delta_1^{-2} - 1| |\lambda| e^{\frac{t}{4} |\re \Phi(\zeta,\lambda )|} e^{\re \frac{iz^2}{2}}
+ C|\lambda - \lambda_0|e^{\frac{t}{4} |\re \Phi(\zeta,\lambda )|} e^{\re \frac{iz^2}{2}}
	\\
\leq &\;
C |\lambda - \lambda_0| ( 1+ |\ln|\lambda-\lambda_0||)|\lambda|e^{-3t|\lambda - \lambda_0|^2}
	\\
& + C |\lambda - \lambda_0| e^{-3t|\lambda - \lambda_0|^2}, \qquad \lambda \in  \mathcal{X}_1 \cap \mathcal{D}_1,
\end{align*}
which gives (\ref{vtildevX}) for $\lambda \in \mathcal{X}_1^\epsilon \cap \mathcal{D}_1$; the proof is similar for the other parts of $\mathcal{X}^\epsilon$.

Since
$$v^{(3)} - v^{\lambda_0} =  e^{-\frac{t\Phi(\zeta, \lambda_0)}{2}\hat{\sigma}_3} \delta_0^{\hat{\sigma}_3}|\lambda_0|^{-\frac{\hat{\sigma}_3}{4}}(\tilde{v} - v^X), \qquad \lambda \in \mathcal{X}^\epsilon,$$
we arrive at
$$|v^{(3)} - v^{\lambda_0}| \leq C|\lambda - \lambda_0| ( 1+ |\ln|\lambda-\lambda_0||)e^{-3t|\lambda-\lambda_0|^2}, \qquad \lambda \in \mathcal{X}^\epsilon.$$
Thus
$$ \|v^{(3)} - v^{\lambda_0}\|_{L^1(\mathcal{X}^\epsilon)}
\leq C \int_0^\epsilon u ( 1+ |\ln u|)e^{-3tu^2} du
\leq C t^{-1} \ln t, \qquad \zeta \in \mathcal{I}, \ t > 2,$$
and
$$ \|v^{(3)} - v^{\lambda_0}\|_{L^\infty(\mathcal{X})}
\leq C \sup_{0 \leq u \leq \epsilon} u ( 1+ |\ln u|)e^{-3 tu^2} 
\leq  C t^{-1/2} \ln t, \qquad \zeta \in \mathcal{I}, \  t > 2,$$
which gives (\ref{v5vmuestimate}).

The variable $z = \sqrt{8t}(\lambda - \lambda_0)$ goes to infinity as $t\to \infty$ if $\lambda \in \partial D_\epsilon(\lambda_0)$.
Thus equation (\ref{mXasymptotics}) yields
\begin{align*}
 & m^X(q, z(\zeta, \lambda)) = I + \frac{m_1^X(\zeta)}{\sqrt{8t}(\lambda - \lambda_0)} + O(qt^{-1}), 
\qquad  t \to \infty,  \ \lambda \in \partial D_\epsilon(\lambda_0),
 \end{align*}  
uniformly with respect to $\lambda \in \partial D_\epsilon(\lambda_0)$, where $m_1^X$ is given by (\ref{m1Xdef}).
Since
$$m^{\lambda_0}(x,t,\lambda) = Y m^X(q, z(\zeta, \lambda)) Y^{-1}$$
and $|Y| \leq C |\lambda_0|^{-1/4}$, this shows that
\begin{align}\label{mmodmmuI}
 & (m^{\lambda_0})^{-1} - I = - \frac{Ym_1^X(\zeta) Y^{-1}}{\sqrt{8t}(\lambda - \lambda_0)} + O\bigg(\frac{q}{|\lambda_0|^{1/2}t}\bigg), \qquad  t \to \infty,
 \end{align}  
uniformly with respect to $\zeta \in \mathcal{I}$ and $\lambda \in \partial D_\epsilon(\lambda_0)$. Using that $|m_1^X| \leq C |q|$, this proves (\ref{mmodmuestimate2}). 
Finally, equation (\ref{mmodmuestimate1})  follows from (\ref{mmodmmuI}) and Cauchy's formula.
\end{proof}

\section{Final steps}\label{finalsec}

Define the approximate solution $m^{app}$ by
$$m^{app} = \begin{cases} m^{\lambda_0}, & \lambda \in D_\epsilon(\lambda_0), \\
I, & \text{elsewhere}.
\end{cases}$$
We will show that the solution $\hat{m}(x,t,\lambda)$ defined by
$$\hat{m} = m^{(3)} (m^{app} )^{-1}.$$
is small for large $t$. The function $\hat{m}$ satisfies the RH problem
\begin{align}\label{RHmhat}
\begin{cases}
\hat{m}(x, t, \cdot) \in I + \dot{E}^2(\C \setminus \hat{\Gamma}),\\
\hat{m}_+(x,t,\lambda) = \hat{m}_-(x, t, \lambda) \hat{v}(x, t, \lambda) \quad \text{for a.e.} \ \lambda \in \hat{\Gamma},
\end{cases}
\end{align}
where the contour $\hat{\Gamma} = \Gamma^{(3)} \cup \partial D_\epsilon(\lambda_0)$ is displayed in Figure \ref{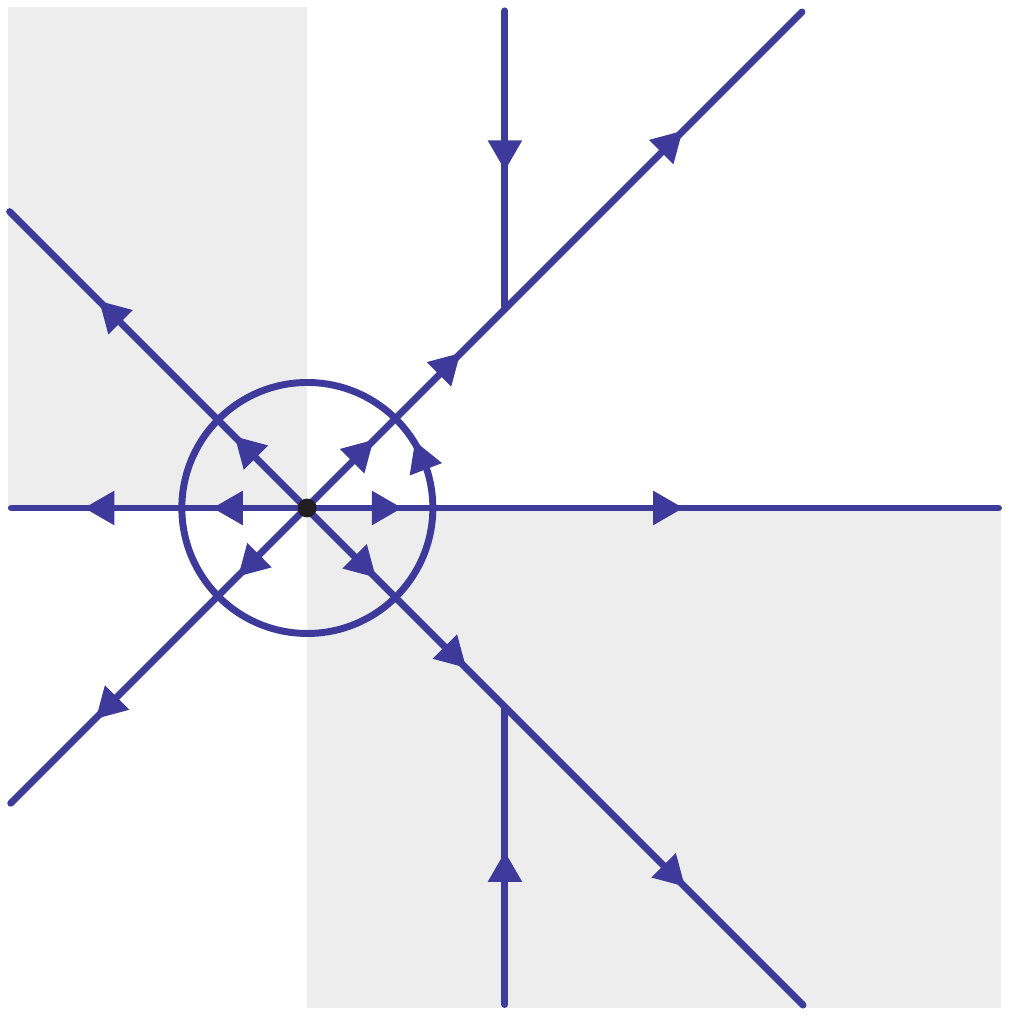} and the jump matrix $\hat{v}$ is given by
$$\hat{v} = \begin{cases}
v^{(3)}, & \lambda \in \hat{\Gamma} \setminus \overline{D_\epsilon(\lambda_0)}, 
	\\
(m^{\lambda_0})^{-1}, & \lambda \in \partial D_\epsilon(\lambda_0), 
	\\
m_-^{\lambda_0} v^{(3)}(m_+^{\lambda_0})^{-1}, & \lambda \in \hat{\Gamma} \cap D_\epsilon(\lambda_0).
\end{cases}
$$
\begin{figure}
\begin{center}
 \begin{overpic}[width=.5\textwidth]{Gammahat.pdf}
      \put(102,48){$\hat{\Gamma}$}
      \put(28,44.3){$\lambda_0$}
      \put(68,68){\small $\re \Phi < 0$}
      \put(68,30){\small $\re \Phi > 0$}
    \end{overpic}
    \vspace{.5cm}
     \begin{figuretext}\label{Gammahat.pdf}
        The contour  $\hat{\Gamma}$ together with the domains where $\re \Phi < 0$ (white) and $\re \Phi > 0$ (shaded). 
         \end{figuretext}
     \end{center}
\end{figure}

\begin{lemma}\label{hatw}
Let $\hat{w} = \hat{v} - I$. The following estimates hold uniformly for $t > 2$ and $\zeta \in \mathcal{I}$:
\begin{subequations}\label{hatwestimate}
\begin{align}\label{hatwestimate1}
& \|\hat{w}\|_{(L^1\cap L^2 \cap L^\infty)(\mathcal{X} \setminus \overline{D_\epsilon(\lambda_0)})} \leq C e^{-ct}, \qquad 
	\\\label{hatwestimate2}
& \|\hat{w}\|_{(L^1\cap L^2 \cap L^\infty)(\Gamma^{(3)} \setminus (\mathcal{X} \cup \partial D_\epsilon(\lambda_0))} \leq C t^{-3/2},
	\\\label{hatwestimate3}
& \|\hat{w}\|_{(L^1\cap L^2 \cap L^\infty)(\partial D_\epsilon(\lambda_0))} \leq C q |\lambda_0|^{-1/2} t^{-1/2},
	\\\label{hatwestimate4}
& \|\hat{w}\|_{L^1(\mathcal{X}^\epsilon)} \leq Ct^{-1}\ln t,
	\\\label{hatwestimate5}
& \|\hat{w}\|_{L^2(\mathcal{X}^\epsilon)} \leq Ct^{-3/4}\ln t,
	\\\label{hatwestimate6}
& \|\hat{w}\|_{L^\infty(\mathcal{X}^\epsilon)} \leq  C t^{-1/2} \ln t.
\end{align}
\end{subequations}
\end{lemma}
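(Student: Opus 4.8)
The plan is to estimate $\hat{w} = \hat{v} - I$ separately on the three types of subcontour that occur in the definition of $\hat{v}$, reducing the two hardest cases directly to Lemma \ref{mulemma} and handling the remaining pieces of $\Gamma^{(3)}$ by elementary exponential and remainder bounds. Throughout, I would keep careful track of the dependence on $\lambda_0$ so that all constants remain uniform as $\zeta \to 0$ (equivalently $\lambda_0 \to 0$); the mechanism for this is that every potentially singular factor $|\lambda_0|^{\pm 1/2}$ appears multiplied by $q = |\lambda_0|^{1/2} r(\lambda_0)$, leaving a bounded quantity.

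On the circle $\partial D_\epsilon(\lambda_0)$ one has $\hat{w} = (m^{\lambda_0})^{-1} - I$. Since the circle has fixed length $2\pi\epsilon$, its $L^1$, $L^2$, and $L^\infty$ norms are comparable, so (\ref{hatwestimate3}) follows at once from the $L^\infty$ bound (\ref{mmodmuestimate2}). On $\mathcal{X}^\epsilon = \hat{\Gamma} \cap D_\epsilon(\lambda_0)$ one has $\hat{w} = m_-^{\lambda_0} v^{(3)}(m_+^{\lambda_0})^{-1} - I$; inserting the jump relation $m_+^{\lambda_0} = m_-^{\lambda_0} v^{\lambda_0}$ I would rewrite
$$\hat{w} = m_-^{\lambda_0}\bigl(v^{(3)}(v^{\lambda_0})^{-1} - I\bigr)(m_-^{\lambda_0})^{-1} = m_-^{\lambda_0}(v^{(3)} - v^{\lambda_0})(v^{\lambda_0})^{-1}(m_-^{\lambda_0})^{-1}.$$
The factors $m_\pm^{\lambda_0}$ and $(m^{\lambda_0})^{-1}$ are uniformly bounded by (\ref{mlambda0bound}) (using $\det m^{\lambda_0} = 1$), and $v^{\lambda_0}$, $(v^{\lambda_0})^{-1}$ are uniformly bounded because conjugation by $Y$ turns the $O(|q|)$ off-diagonal entries of $v^X$ into entries of size $|\lambda_0||r(\lambda_0)|$ and $|r(\lambda_0)|$. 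Hence $|\hat{w}| \leq C|v^{(3)} - v^{\lambda_0}|$ pointwise, and the bounds (\ref{hatwestimate4})–(\ref{hatwestimate6}) follow immediately from (\ref{v5vmuestimate}).

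It remains to treat $\hat{w} = v^{(3)} - I$ on $\hat{\Gamma} \setminus \overline{D_\epsilon(\lambda_0)}$, which I would split into the deformed cross $\mathcal{X} \setminus \overline{D_\epsilon(\lambda_0)}$ and the leftover axis rays (contours $7$–$10$ of Figure \ref{Gamma2.pdf}). On the cross, the off-diagonal entries of $v^{(3)}$ are built from $h_a$, $r_{1,a}$, $r_{2,a}$ times $\delta^{\pm 2}$ and the exponential $e^{\pm t\Phi}$; by construction the sign of $\re \Phi$ on each ray matches the exponential, so that $|e^{\pm t\Phi}| = e^{-t|\re \Phi|}$, while (\ref{haestimate}) and (\ref{rjaestimates}) bound the analytic parts by $Ce^{\frac{t}{4}|\re \Phi|}/(1+|\lambda|^2)$. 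The product decays like $e^{-\frac{3t}{4}|\re \Phi|}$, and since $|\re \Phi| = 4|\lambda-\lambda_0|^2 \geq 4\epsilon^2$ here, integrating $\int_\epsilon^\infty e^{-3tu^2}\,du$ and taking the supremum yield (\ref{hatwestimate1}) in all three norms. On the axis rays $\re \Phi = 0$ on $\R$ while $\re \Phi$ has the favorable sign on $i\R$, so $|e^{\pm t\Phi}| \leq 1$ and $\delta^{\pm 2}$ is bounded; the entries of $v^{(3)} - I$ then involve only the remainders $r_{1,r}$, $r_{2,r}$, $h_r$ (the diagonal correction on contour $9$ being even smaller, $O(t^{-3})$), so $\hat{w}$ inherits the $O(t^{-3/2})$ norm bounds of Lemma \ref{decompositionlemma}(c) and Lemma \ref{decompositionlemma2}(c),(d), giving (\ref{hatwestimate2}). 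The polynomial prefactor $\lambda$ on these rays is harmless because the remainders inherit the decay $O(\lambda^{-2})$ of $r$ and $h$ at infinity recorded in (\ref{hexpansion}).

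The routine cases are genuinely routine once Lemma \ref{mulemma} is in hand; the main point requiring care is the bookkeeping for (\ref{hatwestimate1}), namely verifying ray by ray that the sign of $\re \Phi$ dictated by Figure \ref{Gamma2.pdf} always matches the exponential appearing in the corresponding entry of $v^{(3)}$, so that the growth $e^{\frac{t}{4}|\re \Phi|}$ permitted by the analytic-approximation lemmas is strictly dominated by the decay $e^{-t|\re \Phi|}$. The secondary technical issue is to confirm, as noted above, that uniformity survives as $\lambda_0 \to 0$; this is exactly where the combination $q|\lambda_0|^{-1/2} = r(\lambda_0)$ is used to cancel the apparent singularity.
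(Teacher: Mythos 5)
Your proposal is correct and follows essentially the same route as the paper: exponential smallness on $\mathcal{X}\setminus\overline{D_\epsilon(\lambda_0)}$ from the sign of $\re\Phi$ combined with the bounds (\ref{haestimate}), (\ref{rjaestimates}); the remainder estimates of Lemmas \ref{decompositionlemma}--\ref{decompositionlemma2} on the undeformed rays; (\ref{mmodmuestimate2}) on the circle; and the conjugation identity $\hat{w}=m_-^{\lambda_0}(v^{(3)}-v^{\lambda_0})(m_+^{\lambda_0})^{-1}$ (your version with $(v^{\lambda_0})^{-1}(m_-^{\lambda_0})^{-1}$ is the same formula) together with (\ref{mlambda0bound}) and (\ref{v5vmuestimate}) on $\mathcal{X}^\epsilon$. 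Your added bookkeeping --- the pointwise $O(\lambda^{-2})$ decay of the remainders to absorb the $\lambda$ prefactors, and the cancellation $q|\lambda_0|^{-1/2}=r(\lambda_0)$ for uniformity as $\zeta\to 0$ --- is consistent with (and slightly more explicit than) the paper's argument.
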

\begin{proof}
For $\lambda \in \mathcal{X}_2 \setminus \overline{D_\epsilon(\lambda_0)}$, only the $(12)$ entry of $\hat{w}$ is nonzero and, using (\ref{rjaestimates}),
\begin{align*}
|(\hat{w}(x,t,\lambda))_{12}| & = |-\delta(\zeta, \lambda)^2 r_{2,a}(x,t,\lambda) e^{-t\Phi(\zeta, \lambda)}|
\leq C |r_{2,a}| e^{-t|\re \Phi|}
	\\
& \leq C e^{-\frac{3t}{4}|\re \Phi|}, \qquad \lambda \in \mathcal{X}_2 \setminus \overline{D_\epsilon(\lambda_0)}.
\end{align*}
But $\re \Phi(\zeta, \lambda) = -8(\im \lambda)(\re \lambda - \lambda_0)$; hence $-|\re \Phi(\zeta, \lambda)| \leq -4\epsilon^2$ for all $\lambda \in \mathcal{X} \setminus \overline{D_\epsilon(\lambda_0)}$ and $\zeta \in \mathcal{I}$. We find
$$|(\hat{w}(x,t,\lambda))_{12}| \leq C e^{- 3\epsilon^2 t}, \qquad \lambda \in \mathcal{X}_2 \setminus \overline{D_\epsilon(\lambda_0)}, \ \zeta \in \mathcal{I}, \ t \geq 1.$$
This gives the estimate (\ref{hatwestimate1}) on $\mathcal{X}_2 \setminus \overline{D_\epsilon(\lambda_0)}$; the norms on $\mathcal{X}_j \setminus \overline{D_\epsilon(\lambda_0)}$, $j = 1,3,4$, are estimated in a similar way. This proves (\ref{hatwestimate1}).

The jump matrix $v^{(3)}$ on $\Gamma^{(3)} \setminus (\mathcal{X} \cup \partial D_\epsilon(\lambda_0))$ involves the small remainders $h_r$, $r_{1,r}$, $r_{2,r}$, so the estimate (\ref{hatwestimate2}) holds as a consequence of Lemma \ref{decompositionlemma} and Lemma \ref{decompositionlemma2}.

The estimate (\ref{hatwestimate3}) follows from (\ref{mmodmuestimate2}).
For $\lambda \in \mathcal{X} \cap D_\epsilon(\lambda_0)$, we have
$$\hat{w} = m_-^{\lambda_0} (v^{(3)} - v^{\lambda_0}) (m_+^{\lambda_0})^{-1},$$
so equations (\ref{mlambda0bound}) and (\ref{v5vmuestimate}) yield (\ref{hatwestimate4})-(\ref{hatwestimate6}).
\end{proof}

The estimates in Lemma \ref{hatw} show that
\begin{align}\label{hatwLinfty}
\begin{cases}
\|\hat{w}\|_{(L^1 \cap L^2)(\hat{\Gamma})} \leq C t^{-1/2},
	\\
\|\hat{w}\|_{L^\infty(\hat{\Gamma})} \leq  C t^{-1/2} \ln t,
\end{cases}	 \qquad t > 2, \ \zeta \in \mathcal{I}.
\end{align}
If $f \in L^2(\hat{\Gamma})$, then the Cauchy transform $\hat{\mathcal{C}}f$ is defined by
\begin{align}\label{Cauchytransform}
(\hat{\mathcal{C}}f)(\lambda) = \frac{1}{2\pi i} \int_{\hat{\Gamma}} \frac{f(s)}{s - \lambda} ds, \qquad \lambda \in \C \setminus \hat{\Gamma}.
\end{align}
We let $\hat{\mathcal{C}}_+ f$ and $\hat{\mathcal{C}}_-f$ denote the nontangential boundary values of $\hat{\mathcal{C}}f$ from the left and right sides of $\hat{\Gamma}$. Then $\hat{\mathcal{C}}_\pm \in \mathcal{B}(L^2(\hat{\Gamma}))$ and $\hat{\mathcal{C}}_+ - \hat{\mathcal{C}}_- = I$, where
$\mathcal{B}(L^2(\hat{\Gamma}))$ denotes the Banach space of bounded linear maps $L^2(\hat{\Gamma}) \to L^2(\hat{\Gamma})$.
By (\ref{hatwLinfty}), we have
\begin{align}\label{Chatwnorm}
\|\hat{\mathcal{C}}_{\hat{w}}\|_{\mathcal{B}(L^2(\hat{\Gamma}))} \leq C \|\hat{w}\|_{L^\infty(\hat{\Gamma})}  
\leq C t^{-1/2} \ln t, \qquad t > 2, \ \zeta \in \mathcal{I},
\end{align}
where the operator $\hat{\mathcal{C}}_{\hat{w}}: L^2(\hat{\Gamma}) + L^\infty(\hat{\Gamma}) \to L^2(\hat{\Gamma})$ is defined by
$\hat{\mathcal{C}}_{\hat{w}}f = \hat{\mathcal{C}}_-(f \hat{w})$.

In particular, there exists a $T > 2$ such that $I - \hat{\mathcal{C}}_{\hat{w}(\zeta, t, \cdot)} \in \mathcal{B}(L^2(\hat{\Gamma}))$ is invertible for all $t > T$. 
We define $\hat{\mu}(x, t, \lambda) \in I + L^2(\hat{\Gamma})$ for $t > T$ by
\begin{align}\label{hatmudef}
\hat{\mu} = I + (I - \hat{\mathcal{C}}_{\hat{w}})^{-1}\hat{\mathcal{C}}_{\hat{w}}I.
\end{align}
Standard estimates using the Neumann series show that
\begin{align*}
\|\hat{\mu} - I\|_{L^2(\hat{\Gamma})} \leq  
\frac{C\|\hat{w}\|_{L^2(\hat{\Gamma})}}{1 - \|\hat{\mathcal{C}}_{\hat{w}}\|_{\mathcal{B}(L^2(\hat{\Gamma}))}}.
\end{align*}
Thus, by (\ref{hatwLinfty}) and (\ref{Chatwnorm}),
\begin{align}\label{muhatestimate}
\|\hat{\mu}(x,t,\cdot) - I\|_{L^2(\hat{\Gamma})} \leq Ct^{-1/2}, \qquad t > T, \ \zeta \in \mathcal{I}.
\end{align}
It follows that there exists a unique solution $\hat{m} \in I + \dot{E}^2(\hat{\C} \setminus \hat{\Gamma})$ of the RH problem (\ref{RHmhat}) for all $t > T$. This solution is given by
\begin{align}\label{hatmrepresentation}
\hat{m}(x, t, \lambda) = I + \hat{\mathcal{C}}(\hat{\mu}\hat{w}) = I + \frac{1}{2\pi i}\int_{\hat{\Gamma}} \hat{\mu}(x, t, s) \hat{w}(x, t, s) \frac{ds}{s - \lambda}.
\end{align}

\subsection{Asymptotics of $\hat{m}$}
The following nontangential limit exists as $\lambda \to \infty$:
\begin{align}\label{limlhatm}
\ntlim_{\lambda\to \infty} \lambda(\hat{m}(x,t,\lambda) - I) 
= - \frac{1}{2\pi i}\int_{\hat{\Gamma}} \hat{\mu}(x,t,\lambda) \hat{w}(x,t,\lambda) d\lambda.
\end{align}
By (\ref{hatwestimate2}) and (\ref{muhatestimate}),
\begin{align*}
\int_{\Gamma'} \hat{\mu}(x,t,\lambda) \hat{w}(x,t,\lambda) d\lambda
& = \int_{\Gamma'} \hat{w}(x,t,\lambda) d\lambda
+
\int_{\Gamma'} (\hat{\mu}(x,t,\lambda) - I) \hat{w}(x,t,\lambda) d\lambda
	\\
& = O(\|\hat{w}\|_{L^1(\Gamma')})
+ O(\|\hat{\mu} - I\|_{L^2(\Gamma')}\|\hat{w}\|_{L^2(\Gamma')})
	\\
& = O(t^{-3/2}), \qquad t \to \infty,
\end{align*}
where $\Gamma' := \Gamma^{(3)} \setminus (\mathcal{X}^\epsilon \cup \partial D_\epsilon(\lambda_0))$.
Hence the contribution to the integral in (\ref{limlhatm}) from $\Gamma'$ is $O(t^{-3/2})$.
By (\ref{mmodmuestimate1}), (\ref{hatwestimate3}), and (\ref{muhatestimate}), the contribution from $\partial D_\epsilon(\lambda_0)$ to the right-hand side of (\ref{limlhatm}) is
\begin{align*}
&- \frac{1}{2\pi i} \int_{\partial D_\epsilon(\lambda_0)} \hat{w}(x,t,\lambda) d\lambda	
 - \frac{1}{2\pi i} \int_{\partial D_\epsilon(\lambda_0)} (\hat{\mu}(x,t,\lambda) - I) \hat{w}(x,t,\lambda) d\lambda
	\\
= & - \frac{1}{2\pi i}\int_{\partial D_\epsilon(\lambda_0)}((m^{\lambda_0})^{-1} - I) d\lambda
+ O(\|\hat{\mu} - I\|_{L^2(\partial D_\epsilon(\lambda_0))}\|\hat{w}\|_{L^2(\partial D_\epsilon(\lambda_0))})
	\\
= & \; \frac{Y(\zeta, t) m_1^X(\zeta) Y(\zeta, t)^{-1}}{\sqrt{8t}} 
+ O\bigg(\frac{q}{|\lambda_0|^{1/2} t}\bigg), \qquad t \to \infty.
\end{align*}
Finally, by (\ref{hatwestimate}) and (\ref{muhatestimate}), the contribution from $\mathcal{X}^\epsilon$ to the right-hand side of (\ref{limlhatm}) is
$$O\big(\|\hat{w}\|_{L^1(\mathcal{X}^\epsilon)}
+ \|\hat{\mu} - I\|_{L^2(\mathcal{X}^\epsilon)}\|\hat{w}\|_{L^2(\mathcal{X}^\epsilon)}\big)
= O(t^{-1} \ln t), \qquad t\to \infty.$$
Collecting the above contributions, we find from (\ref{limlhatm}) that
\begin{align}\label{limlhatm2}
\ntlim_{\lambda\to \infty} \lambda(\hat{m}(x,t,\lambda) - I) 
= \frac{Y(\zeta, t) m_1^X(\zeta) Y(\zeta, t)^{-1}}{\sqrt{8t}} 
+ O\bigg(\frac{\ln t}{t}\bigg), \qquad t\to \infty,
\end{align}
uniformly with respect to $\zeta \in \mathcal{I}$.

\subsection{Asymptotics of $u$}
Taking the transformations of Section \ref{transsec} into account, we find 
$$
m(x,t,\lambda) = \hat{m}(x,t,\lambda)H(x,t,\lambda)^{-1}\delta(\zeta, t)^{\sigma_3}$$
for all large $\lambda \in \C \setminus \hat{\Gamma}$.
It follows from (\ref{tildeurecover}), (\ref{deltaasymptotics}), (\ref{Hdef}), and (\ref{limlhatm2}) that
\begin{align}\nonumber
\tilde{u}(x,t) =& \ntlim_{\lambda\to \infty} \lambda(m(x,t,\lambda) - I)_{12} 
	\\\nonumber
= & \ntlim_{\lambda\to \infty} \lambda(\hat{m}(x,t,\lambda) - I)_{12}
	\\\nonumber
=& \; \frac{(Y(\zeta, t) m_1^X(\zeta) Y(\zeta, t)^{-1})_{12}}{\sqrt{8t}} + O(t^{-1}\ln t)
	\\\nonumber
=& - \frac{i\beta^X(q)\delta_0^2 e^{-t\Phi(\zeta, \lambda_0)}}{\sqrt{8t|\lambda_0|}}+ O(t^{-1}\ln t)
	\\ \label{utildeasymptotics}
=& - \frac{i\beta^X(q) (8t)^{i\nu} e^{2\chi(\zeta, \lambda_0)}e^{4it\lambda_0^2}}{\sqrt{8t|\lambda_0|}}+ O(t^{-1}\ln t),  \qquad t\to \infty,
\end{align} 
uniformly with respect to $\zeta \in \mathcal{I}$. 

By (\ref{urecover}), we have
\begin{align}\label{uutilde}
u(x,t) = 2i\tilde{u}(x,t)e^{2i\int^{(x,t)}_{(0, 0)} \Delta}.
\end{align}
If we choose an integration contour consisting of the vertical segment from $(0,0)$ to $(0,t)$ followed by the horizontal segment from $(0,t)$ to $(x,t)$, it follows from the definition (\ref{Deltadef}) of $\Delta$ and the equality $|u| = 2|\tilde{u}|$ that
\begin{align}\label{intDelta}
\int^{(x,t)}_{(0, 0)} \Delta
= \int_0^t \bigg(\frac{3}{4}|g_0|^4 - \frac{i}{2}(\bar{g}_1g_0 - \bar{g}_0 g_1)\bigg)dt'
+ \int_0^x  2|\tilde{u}(x',t)|^2 dx'.
\end{align}
The asymptotic formula (\ref{utildeasymptotics}) for $\tilde{u}$ gives 
\begin{align}\nonumber
\int_0^x  2|\tilde{u}(x',t)|^2 dx' 
= &\; 2\int_0^x \bigg|\sqrt{\frac{\nu(\zeta')}{8t|\lambda_0(\zeta')|}\bigg|_{\zeta' = \frac{x'}{t}}} + O(t^{-1}\ln t)\bigg|^2 dx' 
	\\\nonumber
= &\; \frac{1}{8\pi t}\int_0^x \frac{\ln(1+\frac{x'}{4t}|r(-\frac{x'}{4t})|^2 )}{x'/(4t)} dx'  + O(xt^{-3/2} \ln t )+O(xt^{-2} (\ln t)^2 )
	\\\label{inttildeu}
=&\; \frac{1}{2\pi}\int_0^{|\lambda_0|} \frac{\ln(1+s|r(-s)|^2)}{s}ds  + O(t^{-1/2} \ln t ), \qquad t \to \infty,
\end{align}
where the error term is uniform with respect to $\zeta \in \mathcal{I}$ and we have used that $x \leq Ct$ and $\nu \leq C |\lambda_0|$ for $\zeta \in \mathcal{I}$.
Substituting (\ref{utildeasymptotics}), (\ref{intDelta}), and (\ref{inttildeu}) into (\ref{uutilde}), the asymptotic formula (\ref{uasymptotics}) for $u(x,t)$ follows. This completes the proof of Theorem \ref{mainth}.

\appendix

\section{Proof of Lemma \ref{decompositionlemma}} \label{appA}
For the proof of Lemma \ref{decompositionlemma}, we will need the following properties of $h$:
\begin{itemize}
\item $h \in C^{6}(i\R_+)$, where $\R_+ = [0, \infty)$.
\item There exist complex constants $\{p_j\}_0^{5}$ such that
\begin{align*}
& h^{(n)}(\lambda) = \frac{d^n}{d{\lambda}^n}\bigg(\sum_{j=0}^{5} p_j {\lambda}^j\bigg)+O({\lambda}^{6-n}), \qquad \lambda \to 0, \ \lambda \in i\R_{+}, \ n = 0,1,2.
\end{align*}
\item There exists a complex constants $\{h_j\}_2^3$ such that
\begin{align*}
& h^{(n)}(\lambda) = \frac{d^n}{d{\lambda}^n}\bigg(\sum_{j=2}^3 \frac{h_j}{\lambda^j} \bigg)+O\Big(\frac{1}{\lambda^{4+n}}\Big), \qquad \lambda \to \infty, \ \lambda \in i\R_{+}, \ n = 0,1,2.
\end{align*}
\end{itemize}
The above properties are direct consequences of the fact that $h(\lambda)$ is a smooth function of $\lambda \in \bar{\mathcal{D}}_2$ which satisfies (\ref{hexpansion}) as $\lambda \to \infty$. 

Let
$$f_0(\lambda) = \sum_{j=2}^{9} \frac{a_j}{(\lambda + i)^j},$$
where $\{a_j\}_2^{9}$ are complex constants such that
\begin{align}\label{linearconditions}
f_0(\lambda) = 
\begin{cases}
\sum_{j=0}^{5} p_j {\lambda}^j + O({\lambda}^{6}), & \lambda \to 0,
	\\
 \sum_{j=2}^{3} h_j {\lambda}^{-j} + O({\lambda}^{-4}), & \lambda \to \infty.
\end{cases}
\end{align}
It is easy to verify that (\ref{linearconditions}) imposes eight linearly independent conditions on the $a_j$; hence the coefficients $a_j$ exist and are unique.
The function $f_0(\lambda)$ coincides with $h(\lambda)$ to fifth order at $0$ and to third order at $\infty$. More precisely, letting $f = h - f_0$, we have
\begin{align}\label{hcoincide}
 f^{(n)}(\lambda) =
\begin{cases}
 O(\lambda^{6 - n}), & \lambda \to 0, 
	\\
O({\lambda}^{-4-n}), & \lambda \to \infty, 
 \end{cases}
 \qquad  \lambda \in i\R_{+}, \ n = 0,1,2.
\end{align}

The decomposition of $h(\lambda)$ can now be derived as follows.
The map $\lambda \mapsto \phi = \phi(\lambda)$ defined by
$$\phi(\lambda) = 4 {\lambda}^2$$ 
is a bijection $[0,i\infty) \to (-\infty,0]$, so we may define a function $F:\R \to \C$ by
\begin{align}\label{Fphi}
F(\phi) = \begin{cases} \frac{(\lambda +i)^3}{\lambda} f(\lambda), & \phi \leq 0, \\
0, & \phi > 0, 
\end{cases}
\quad \zeta \in \mathcal{I},
\end{align}
The function $F(\phi)$ is $C^6$ for $\phi \neq 0$ and 
$$F^{(n)}(\phi) = \bigg(\frac{1}{8 \lambda} \frac{\partial }{\partial \lambda}\bigg)^n \bigg(\frac{(\lambda+i)^3}{\lambda}f(\lambda)\bigg), \qquad \phi < 0.$$ 
Using (\ref{hcoincide}) it follows that $F \in C^2(\R)$ and $F^{(n)}(\phi) = O(|\phi|^{-1-n})$ as $|\phi| \to \infty$ for $n = 0,1,2$.
In particular,
\begin{align*}
\bigg\|\frac{d^nF}{d \phi^n}\bigg\|_{L^2(\R)} < \infty, \qquad n = 0,1,2,
\end{align*}
that is, $F$ belongs to the Sobolev space $H^2(\R)$.
It follows that the Fourier transform $\hat{F}(s)$ defined by
$$\hat{F}(s) = \frac{1}{2\pi} \int_{\R} F(\phi) e^{-i\phi s} d\phi,$$
satisfies
\begin{align}\label{FphihatF}
F(\phi) =  \int_{\R} \hat{F}(s) e^{i\phi s} ds,
\end{align}
and, by the Plancherel theorem, $\|s^2 \hat{F}(s)\|_{L^2(\R)} < \infty$.
Equations (\ref{Fphi}) and (\ref{FphihatF}) imply
$$ \frac{\lambda}{(\lambda+i)^3}\int_{\R} \hat{F}(s) e^{4is\lambda^2} ds 
= f(\lambda), \qquad  \lambda \in i\R_{+}.$$
Writing
$$f(\lambda) = f_a(t, \lambda) + f_r(t, \lambda), \qquad t > 0, \  \lambda \in i\R_{+},$$
where the functions $f_a$ and $f_r$ are defined by
\begin{align*}
& f_a(t,\lambda) = \frac{\lambda}{(\lambda+i)^3}\int_{-\frac{t}{4}}^\infty \hat{F}(s) e^{4is{\lambda}^2} ds, \qquad t > 0, \ \lambda \in \bar{{\mathcal D}}_1,  
	\\
& f_r(t,\lambda) = \frac{\lambda}{(\lambda+i)^3}\int_{-\infty}^{-\frac{t}{4}} \hat{F}(s) e^{4is{\lambda}^2} ds,\qquad t > 0, \   \lambda \in i\R_{+},
\end{align*}
we infer that $f_a(t, \cdot)$ is continuous in $\bar{{\mathcal D}}_1$ and analytic in ${\mathcal D}_1$. 
Furthermore, since $|\re 4i{\lambda}^2| \leq |\re \Phi(\zeta, \lambda)|$ for all $\lambda \in \bar{{\mathcal D}}_1$ and $\zeta \in \mathcal{I}$, we find
\begin{align*}\nonumber
 |f_a(t, \lambda)| 
&\leq \frac{|\lambda|}{|\lambda+i|^3}\|\hat{F}\|_{L^1(\R)}  \sup_{s \geq -\frac{t}{4}} e^{s \re 4i{\lambda^2}}
\leq \frac{C |\lambda|}{|\lambda+i|^3} e^{\frac{t}{4} |\re 4i{\lambda}^2|} 
	\\ 
 &\leq \frac{C |\lambda|}{1 + |\lambda|^3} e^{\frac{t}{4} |\re \Phi(\zeta, \lambda)|}, \qquad \zeta \in \mathcal{I}, \ t > 0, \ \lambda \in \bar{{\mathcal D}}_1,
\end{align*}
and
\begin{align*}\nonumber
|f_r(t, \lambda)| & \leq \frac{|\lambda|}{|\lambda+i|^3} \int_{-\infty}^{-\frac{t}{4}} s^2 |\hat{F}(s)| s^{-2} ds
 \leq \frac{C|\lambda|}{|\lambda+i|^3}  \| s^2 \hat{F}(s)\|_{L^2(\R)} \sqrt{\int_{-\infty}^{-\frac{t}{4}} s^{-4} ds}  
 	\\ 
&  \leq \frac{C}{1 + |\lambda|^2}  t^{-3/2}, \qquad \zeta \in \mathcal{I}, \ t > 0, \ \lambda \in i\R_{+}.
\end{align*}
Hence the $L^1$, $L^2$, and $L^\infty$ norms of $f_r$ on $i\R_{+}$ are $O(t^{-3/2})$. 
Letting
\begin{align*}
& h_{a}(t, \lambda) = f_0(\lambda) + f_a(t, \lambda), \qquad t > 0, \ \lambda \in \bar{{\mathcal D}}_1,
	\\
& h_{r}(t, \lambda) = f_r(t, \lambda), \qquad t > 0, \ \lambda \in i\R_{+},
\end{align*}
we find a decomposition of $h$ with the properties listed in the statement of the lemma.

\bigskip
\noindent
{\bf Acknowledgement}  {\it The authors are grateful to the referees for helpful remarks. L. K. Arruda thanks the members of the Department of Mathematics at KTH Royal Institute of Technology for their kind hospitality. J. Lenells acknowledges support from European Research Council, Consolidator Grant No. 682537, the Swedish Research Council, Grant No. 2015-05430, and the G\"oran Gustafsson Foundation, Sweden.}

\bibliographystyle{plain}
\bibliography{is}

\end{document}